\documentclass[12pt,onecolumn,draftcls]{IEEEtran}
\usepackage{amsmath,epsfig}
\usepackage{amssymb}
\usepackage{amsmath}
\usepackage{cases} 
\usepackage{amssymb,amsmath,cite}
\usepackage{epsfig}
\usepackage{color}
\usepackage{slashbox}
\usepackage{bm}
\usepackage{slashbox}
\newcommand{\SINR}{{\mbox{SINR}}}

\newcommand{\bc}{\mathbf{c}}
\newcommand{\be}{\mathbf{e}}
\newcommand{\bE}{\mathbf{E}}

\newcommand{\bg}{\mathbf{g}}

\newcommand{\bI}{\mathbf{I}}

\newcommand{\bx}{\mathbf{x}}

\newcommand{\bs}{\mathbf{s}}
\newcommand{\bt}{\mathbf{t}}

\newcommand{\bA}{\mathbf{A}}

\newcommand{\bp}{\mathbf{p}}
\newcommand{\bq}{\mathbf{q}}

\renewcommand{\frac}{\dfrac}

\renewcommand{\Pr}{\mathbb{P}}

\newcommand{\K}{{\cal K}}
\newcommand{\N}{{\cal N}}

\newcommand{\I}{{\cal I}}
\newcommand{\J}{{\cal J}}

\renewcommand{\S}{\cal S}
\newcommand{\SI}{\mbox{SINR}}

\newcommand{\T}{{{T}}}

\newtheorem{dingli}{Theorem~}
\newtheorem{yinli}{Lemma~}
\newtheorem{mingti}{Proposition~}

\newenvironment{proof}[1][Proof]{\begin{trivlist}
\item[\hskip \labelsep {\bfseries #1}]}{\end{trivlist}}

\begin{document}
  \title{Sample Approximation-Based Deflation Approaches for Chance SINR Constrained Joint Power and Admission Control
  \author{Ya-Feng Liu, Mingyi Hong, {and Enbin Song}}
  \thanks{Part of this work has been presented in the 14th IEEE International Workshop on Signal Processing Advances in Wireless Communications (SPAWC), Darmstadt, Germany, June 16--19, 2013 \cite{SPAWC}.}
 \thanks{Y.-F.~Liu (corresponding author) is with the State Key Laboratory
of Scientific and Engineering Computing, Institute of Computational
Mathematics and Scientific/Engineering Computing, Academy of
Mathematics and Systems Science, Chinese Academy of Sciences,
Beijing, 100190, China (e-mail: yafliu@lsec.cc.ac.cn). Y.-F. Liu is supported by the National Natural
Science Foundation of China under Grants 11301516, 11331012, and 11571221.}
\thanks{M. Hong is with the Department of Industrial and Manufacturing Systems Engineering (IMSE), Iowa State University, Ames, IA 50011, USA (email: {mingyi@iastate.edu}). M. Hong is supported in part by NSF under Grant CCF-1526078 and by AFOSR under grant 15RT0767.}
\thanks{E. Song is with the Department of
Mathematics, Sichuan University,
Chengdu, 610064, China (e-mail:
 {e.b.song}@163.com). E. Song is supported by the National Natural
Science Foundation of China under Grant 61473197.}}

  \maketitle

\begin{abstract}
     \boldmath  
{}{Consider the joint power and admission control (JPAC) problem for a multi-user single-input single-output (SISO) interference channel.}
{Most existing works on JPAC assume the perfect instantaneous channel state information (CSI). In this paper, we consider the JPAC problem with the imperfect CSI, that is, we assume that only the channel distribution information (CDI) is available. We formulate the JPAC problem into a chance (probabilistic) constrained program, where each link's SINR outage probability is enforced to be less than or equal to a specified tolerance. To circumvent the computational difficulty of the chance SINR constraints, we propose to use the sample (scenario) approximation scheme to convert them into finitely many simple linear constraints. Furthermore, we reformulate the sample approximation of the chance SINR constrained JPAC problem as a composite group sparse minimization problem and then approximate it by a second-order cone program (SOCP). The solution of the SOCP approximation can be used to check the simultaneous supportability of all links in the network and to guide an iterative link removal procedure ({}{the deflation approach}). We exploit the special structure of the SOCP approximation and custom-design an efficient algorithm for solving it. Finally, we illustrate the effectiveness and efficiency of the proposed sample approximation-based deflation approaches by simulations.}

 \end{abstract}
\begin{keywords}
Chance SINR constraint, group sparse, power and admission control, sample approximation. 
\end{keywords}

\section{Introduction}
Joint power and admission control (JPAC) has been recognized as an effective tool for interference
management in cellular, ad hoc, and cognitive underlay wireless
networks for two decades. {Generally speaking, there are two kinds of JPAC: one is to support a maximum number of links at their specified signal to interference plus noise ratio (SINR) targets while using minimum total transmission power when all links in the network cannot be simultaneously supported\cite{convex_approximation,region,removals,Simple,ex2,ex3,bbound,SPAWC,msp,powerx,scheduling,decentralized,admission_luo,performance,r1,r2,r3,r4,lpadmission,distributed,q-norm,chang}, 
and the other is to determine whether a new arrival link can be admitted to the network while maintaining the SINR of all already admitted links above their required SINR levels \cite{admission_Bambos,ex1,ACM}. 
This paper focuses on the former one,} {}{which not only determines the set of links that must be turned off and rescheduled (possibly along orthogonal resource
dimensions such as time, space, or frequency slots)}, {but also {}{alleviates} the difficulties
of the convergence} of stand-alone power control algorithms. For example, a longstanding issue associated with {the Foschini-Miljanic} algorithm \cite{Simple} is that{,} it does not converge when the preselected SINR levels are {infeasible}. In this case, a JPAC approach must be adopted to determine which links {to} be removed.

The JPAC problem can be solved to global optimality by checking the
simultaneous supportability of every subset of links. However, the
computational complexity of this enumeration approach grows
exponentially with the total number of links. Theoretically, the problem is known to be NP-hard to solve (to global optimality) {and} to approximate (to constant ratio of global optimality) \cite{convex_approximation,removals,msp}, so various heuristic
algorithms have been proposed \cite{convex_approximation,region,removals,Simple,ex2,ex3,bbound,msp,SPAWC,powerx,scheduling,decentralized,admission_luo,performance,r1,r2,r3,r4,lpadmission,distributed,q-norm,chang}. 
In particular,
the reference~\cite{convex_approximation} proposed a convex approximation-based algorithm, {called linear
programming deflation (LPD) algorithm. Instead of solving the original NP-hard problem directly, the LPD algorithm
solves an appropriate LP approximation of the original problem at each iteration and use its solution to guide the removal of interfering links.} The removal procedure is terminated if all the remaining links in
the network are simultaneously supportable. The reference~\cite{msp} developed another LP approximation-based new linear programming deflation (NLPD) algorithm for the JPAC problem. In \cite{msp}, the JPAC problem is first equivalently reformulated as a sparse $\ell_0$-minimization problem and then its $\ell_1$-convex approximation is used to derive a LP, which is different from the one in \cite{convex_approximation}. Again, the solution to the derived LP can guide an iterative link removal procedure, and the removal procedure is terminated if all the remaining links in
the network are simultaneously supportable. {Similar ideas were also used in \cite{admission_luo,r2,decentralized} to solve the joint beamforming and admission control problem for the cellular downlink network.}

{{Most of the aforementioned works on the joint power/beamforming and admission control problem assume the perfect instantaneous channel state information (CSI) except \cite{convex_approximation,removals,powerx,r4}.} 
In \cite{convex_approximation}, the authors also considered the worst-case robust JPAC problem with bounded channel estimation errors. The key in \cite{convex_approximation} is that the LP approximation with bounded uncertainty can be equivalently rewritten as a second-order cone program (SOCP). The overall approximation algorithm remains similar to LPD for the case of the perfect CSI, except that the SOCP formulation is used to carry out power control and its solution is used to check whether links are simultaneously supportable in the worst case. In \cite{removals,powerx,r4}, 
the authors employed the Foschini-Miljanic algorithm \cite{Simple} or its variants to update the power and then use the updated power to guide the removals of links without assuming the perfect CSI (as long as the SINR can be measured at the receiver and feedbacked to the corresponding transmiter). The Foschini-Miljanic algorithm \cite{Simple} can leverage the perfect CSI assumption when updating the power, but it does not take admission control into consideration compared to the disciplined convex approximation-based power control algorithms in \cite{convex_approximation,msp,admission_luo,r2,decentralized}. This makes the JPAC algorithms where the power is updated by the Foschini-Miljanic algorithm suffer a significant performance loss in the number of supported links compared to those where the power is updated by the disciplined convex approximation-based power control algorithms.}


%

{The assumption of the perfect CSI generally does not hold true due to CSI estimation errors or limited CSI feedback in practice \cite{Outage12,feedback}. Even though the instantaneous CSI can be perfectly available, dynamic JPAC in accordance with its variations would lead to excessively high
computational and signaling costs.} 
{}{In this paper, we consider the chance (probabilistic or outage-based) SINR constrained JPAC problem, where each link's SINR outage probability must be kept below a given tolerance.} {{}{Different from most of the aforementioned works on JPAC where the perfect CSI is assumed, our new formulation only requires the availability of the channel distribution information (CDI).} Due to the fact that the CDI can remain unchanged over a relatively long period of time, JPAC based on the CDI can therefore be performed on a relatively slow timescale (compared to fast fluctuations of instantaneous channel conditions), hence the overall computational cost and signaling overhead can be significantly reduced, {which is particularly appealing from the network operator's perspective}. Moreover, the chance SINR constrained JPAC formulation can maximize the number of \emph{long-term} supported links by using minimum total transmission power, and at the same time guarantee that \emph{short-term} SINR requirements are respected with high probability, which depends on the user-specified outage tolerance.

It is well-known that characterizing Quality-of-Service (QoS) constraints in terms of an outage probability can significantly improve practicality of the resource allocation algorithms} {including power control and beamforming design; see \cite{distributionally,Outage6,Outage1,Outage12,long-term,Outage13,Outage14,outagenew1} 
and references therein.} Therefore, the chance constrained programming methodology has been widely applied to wireless system designs in recent years.
%
{However, as far as we know, such methodology has not been used in the context of JPAC. This is largely due to the computational challenge of solving the chance SINR constrained JPAC problem. First, chance SINR constraints do not have closed-form expressions and are nonconvex in general. Second, even when the CSI is perfectly available, the JPAC problem is NP-hard to solve and to approximate \cite{convex_approximation,removals,msp}.

{}{This is the first work that formulates the chance SINR constrained JPAC problem and proposes efficient deflation approaches for solving it.} The main contributions of this paper are twofold.

\begin{itemize}
 {}{
 \item \emph{Novel Problem Formulation and Reformulation.} In this paper, we assume that only the CDI is available, which is different from most of the existing works on JPAC where the perfect CSI is assumed. {}{We propose the first chance SINR constrained JPAC formulation}, where each link's SINR outage probability is required to be less than or equal to a preselected tolerance. Furthermore, we approximate the chance SINR constraint via sampling \cite{scenario,uncertainty} and reformulate the sample approximation of the chance SINR constrained JPAC problem {as} a composite group sparse minimization problem.



  \item \emph{Efficient Deflation Approaches.} We propose an efficient convex SOCP approximation (different from that in \cite{convex_approximation}) of the {}{group} sparse minimization reformulation. The solution of the SOCP approximation can be used to check the simultaneous supportability of all links in the network and to guide an iterative link removal procedure (the deflation approach). Instead of relying on standard SOCP solvers to solve the derived SOCP, we exploit its special structure and custom-design an efficient algorithm for solving it. {}{Note that the standard SOCP solvers cannot efficiently solve the SOCP approximation here because both the number of constraints and unknown variables in the SOCP approximation increase linearly with the number of samples, which is generally large in order to guarantee the approximation performance.}

}
\end{itemize}

\emph{Notation}. We {denote} the index
set $\{1,2,\ldots,K\}$ by ${\K}$. Lowercase boldface and uppercase
boldface are used for vectors and matrices, respectively. For a
given vector $\bx,$ the {}{notation} $\max\{\bx\}$, $\min\left\{\bx\right\},$ $(\bx)_k,$ and $\|\bx\|_0$ stand
for its maximum entry, its minimum entry, its $k$-th entry, and the indicator function of $\bx$ (i.e., $\|\bx\|_0=0$ if $\bx=\bm{0}$ and $\|\bx\|_0=1$ otherwise){,} respectively.
%
 The expression 
 $\max\left\{\bx_1,\bx_2\right\}$ ($\min\left\{\bx_1,\bx_2\right\}$) represents the component-wise maximum (minimum) of two vectors $\bx_1$ and $\bx_2$.
~For any subset $\I\subseteq {\cal K}$,  $\bA_\I$ stands for
the matrix formed by the rows of $\bA$ indexed by $\I$. 
We use $(\bA_1,\bA_2)$ to denote the matrix formed by stacking matrices $\bA_1$ and $\bA_2$ by column and use $(\bA_1;\bA_2)$ to denote the matrix formed by stacking $\bA_1$ and $\bA_2$ by row. Similar {}{notation applies} to stacking of vectors and scalars. {Finally, we use $\be$ to represent the vector with all components
being one, $\bI$ the identity matrix, and $\bE_k$ the matrix with all entries being zero except its $k$-th column entries being one, respectively.}

\section{Review of the NLPD algorithm}\label{secreview}

{The algorithms developed for the chance SINR constrained JPAC problem in this paper are based on
  the NLPD algorithm \cite{msp} for the JPAC problem that assumes the perfect CSI. To} streamline the presentation, we briefly review the NLPD algorithm in this section. The basic idea of the NLPD algorithm is to
update the power and check whether all links can be simultaneously supported or
not. If the answer is yes, then terminate the algorithm; else drop one link from the
network and update the power again. The above process is repeated
until all the remaining links can be simultaneously supported.

Specifically, consider a $K$-link (a link corresponds to a transmitter-receiver
pair) {}{single-input single-output} interference channel with channel gains $g_{k,j}\geq0$ (from
transmitter $j$ to receiver $k$), noise power $\eta_k>0,$ SINR
target $\gamma_k>0,$ and power budget $\bar p_k>0$ for $k, j\in
{\K}$. Denote the power allocation vector by
$\bp=(p_1,p_2,\ldots,p_K)^\T$ and the power budget vector by $\bar
\bp=(\bar p_1,\bar p_2,\ldots,\bar p_K)^\T$. Treating interference
as noise, we can write the SINR at the $k$-th receiver as
\begin{equation}\label{sinr}\displaystyle
\SI_k(\bp)=\frac{g_{k,k}p_k}{\eta_k+\displaystyle\sum_{j\neq
k}g_{k,j}p_j},\quad\forall~k\in\K.\end{equation}

Correspondingly, we introduce an equivalent normalized channel. In particular, we use \begin{equation}\label{qq}\bq=\left(q_1,q_2,\ldots,q_K\right)^T\end{equation}
with
$q_k={p_k}/{\bar p_k}$
~to denote the normalized power allocation vector, and use
${\bc}=\left(c_1,c_2,\ldots,c_K\right )^T$ with
  $c_k={\left(\gamma_k\eta_k\right)}/{\left(g_{k,k}\bar p_k\right)}>0$~
to denote the normalized noise vector. 
We denote the normalized channel matrix by $\bA\in {\mathbb R}^{K\times
K}$ with its $(k,j)$-th entry
\begin{equation*}\label{A}
a_{k,j}=\left\{\begin{array}{cl}
1,&\text{if~}k=j;\\
\displaystyle - \frac{\gamma_kg_{k,j}\bar p_j}{g_{k,k}\bar p_k},&\text{if~}k\neq j.
\end{array}
\right.
\end{equation*}
With these {}{notation}, it is simple to check that 
$\SI_k(\bp)\geq\gamma_k$ if and only if $\left(\bA\bq-\bc\right)_k \geq 0.$

Based on the Balancing Lemma\cite{performance}, we reformulate the JPAC problem {as} a sparse optimization problem
\begin{equation}\label{sparse2}
\begin{array}{cl}
\displaystyle \min_{\bq} & \displaystyle \sum_{k\in\K}\|(\bc-\bA\bq)_k\|_0+\alpha\bar\bp^T\bq \\
\mbox{s.t.} & \displaystyle 
\mathbf{0}\leq
\bq\leq \be.
\end{array}
\end{equation}
In the above, 
$\alpha$ is a parameter and $\be$ is the all-one vector of length $K$. For details on the choice of the parameter $\alpha,$ we refer the readers to \cite[Section III-B]{msp}.
Since problem \eqref{sparse2} is NP-hard \cite{convex_approximation}, we further consider
its $\ell_1$-convex approximation (which is equivalent to {an} LP; see \cite{msp})
\begin{equation}\label{ll1}
\begin{array}{cl}
\displaystyle \min_{\bq} & \displaystyle \sum_{k\in\K}\left|(\bc-\bA\bq)_k\right|+\alpha\bar\bp^T\bq\\
\mbox{s.t.} & \displaystyle 
\mathbf{0}\leq
\bq\leq \be.
\end{array}
\end{equation}
By solving \eqref{ll1}, we know whether
all links in the network can be simultaneously supported or not. 
If
not, we drop one link (mathematically, delete the corresponding row and column of
$\bA$ and the corresponding entry of $\bar\bp$ and $\bc$) from the network according to some removal strategy, and solve a reduced problem \eqref{ll1} until all the remaining links are supported.
\section{Problem Formulation}
{Consider the chance SINR constrained JPAC problem, where the channel gains $\left\{g_{k,j}\right\}$ 
in the SINR expression \eqref{sinr} are random variables. In this paper, we assume the distribution of $\left\{g_{k,j}\right\}$ {}{is} known. However, we do not assume any specific channel distribution, which is different from most of the existing works on outage probability constrained resource allocation for wireless systems \cite{Outage6,Outage1,Outage13}.
We also assume that all coordinations and computations are carried out by a central controller who knows the CDI of all links.}
Since $\left\{g_{k,j}\right\}$ 
in \eqref{sinr} are random variables, we need to redefine {the concept of a supported link}. We call link $k$ is supported if its outage probability is below a specified tolerance $\epsilon\in(0,1),$ i.e.,
\begin{equation}\label{chance}\mathbb{P}\left(\SINR_k(\bp)\geq \gamma_k\right)\geq 1-\epsilon,\end{equation}
where the probability is taken with respect to the random variables $\left\{g_{k,j}\right\}.$ 

{}{The chance SINR constrained JPAC problem aims to maximize the number of supported links while using minimum total transmission power. Mathematically, the problem can be formulated as
\begin{equation}\label{cMSP}
\begin{array}{cl}
\displaystyle \max_{{\bp,\,{\cal S}}} & \displaystyle |{\cal S}|-\alpha \be^T\bp\\
[5pt] \mbox{s.t.} & \displaystyle
\Pr\left(\SINR_k(\bp)\geq \gamma_k\right)\geq 1-\epsilon,~k\in\cal S\subseteq\cal K,\\
&\bm{0} \leq \bp\leq \bar\bp.
\end{array}
\end{equation} In the above, $\cal S$ denotes the set of supported links and $|{\cal S}|$ denotes its cardinality, i.e., the number of supported links; the parameter $\alpha$ balances the relative importance of the two goals, i.e., maximizing the number of supported links (the first term $|{\cal S}|$ in the objective) and minimizing the total transmission power (the second term $\be^T\bp$ in the objective).

To gain further understanding of formulation \eqref{cMSP}, we compare it with the following two-stage formulation. Specifically, the first stage maximizes the number of admitted links:
\begin{equation}\label{cMSP1}
\begin{array}{cl}
\displaystyle \max_{{\bp,\,{\cal S}}} & \displaystyle |\cal S| \\
[5pt] \mbox{s.t.} & \displaystyle
\Pr\left(\SINR_k(\bp)\geq \gamma_k\right)\geq 1-\epsilon,~k\in\cal S\subseteq\cal K,\\
&\bm{0} \leq \bp\leq \bar\bp.
\end{array}
\end{equation} We use ${{\cal S}_0}$ to denote the optimal solution for problem \eqref{cMSP1} and call it \emph{the maximum admissible set}.
Notice that the solution for \eqref{cMSP1}
might not be unique. The second stage minimizes the total
transmission power required to support the admitted links: 
\begin{equation}\label{cMSP2}
\begin{array}{cl}
 \displaystyle\min_{\bp} & \be^T\bp \\
  \mbox{s.t.} &  \Pr\left(\SINR_k(\bp)\geq \gamma_k\right)\geq 1-\epsilon,~k\in{{\cal S}_0},\\
&\displaystyle \bm{0} \leq \bp\leq \bar\bp. 
\end{array}
\end{equation}
Due to the choice of ${\cal S}_0,$ power control problem
\eqref{cMSP2} is feasible.

Although the above two-stage formulation (i.e., \eqref{cMSP1} and \eqref{cMSP2}) is intuitive and easy to understand, the formulation \eqref{cMSP} is better in terms of modeling the JPAC problem; see the following Theorem \ref{thm1}.
%
Theorem \ref{thm1} can be shown by a similar argument as used in \cite[Theorem 1]{msp}} and a detailed proof is provided in Section I of \cite{technicalreport}. 
}


\begin{dingli}\label{thm1}
%
{}{Suppose the parameter $\alpha$ satisfies
\begin{equation}\label{alpha1}0< \alpha<\alpha_1:=1/{\be^T\bar\bp}.\end{equation} Then the optimal value of problem \eqref{cMSP1} is $M$ if and only if the optimal value of problem \eqref{cMSP} lies in $(M-1,M).$ Moreover, suppose $({\S}^*, \bp^*)$ is the solution of problem \eqref{cMSP}. Then, ${\S}^*$ is a maximum admissible set and $\be^T\bp^*$ is the minimum total transmission power to support any maximum admissible set.
%
}
%
\end{dingli}

{}{Theorem \ref{thm1} states that the single-stage formulation (6) with $\alpha\in(0,\alpha_1)$ is equivalent to the two-stage formulation (7) and (8) in terms of finding the maximum admissible set. Moreover, it is capable of picking the maximum admissible set with minimum total transmission power from potentially multiple maximum admissible sets.}

In the rest of this paper, we develop sample approximation-based deflation approaches for (approximately) solving the chance SINR constrained JPAC problem \eqref{cMSP}.

\section{Sample Approximation and Reformulation}

In general, the chance SINR constrained optimization problem \eqref{cMSP} is difficult to solve exactly, since it is difficult to obtain the closed-form expression of \eqref{chance}. In this section, we first approximate the computationally intractable chance SINR constraint via sampling, and then reformulate the sample approximation of problem \eqref{cMSP} {as} a composite group sparse optimization problem. {}{Three distinctive advantages of the sample approximation scheme in the context of approximating the chance SINR constraint \eqref{chance} are as follows.}
{First, it works for general channel distribution models and thus is distributionally robust.}
Second, the sample approximation technique significantly simplifies problem \eqref{cMSP} by replacing the difficult chance SINR constraint with finitely many simple linear constraints {}{(depending on the sample size)}. Last but not the least, solving the sample approximation problem returns a solution to the original chance constrained problem with guaranteed performance\cite{scenario,uncertainty}.

{{}{It is worthwhile remarking that safe tractable approximation \cite{upperbound1,anthony} is an alternative approach to the sample approximation approach to dealing with the chance constraint. The safe tractable approximation approach builds an analytic upper bound of the probability for the chance constraint to be violated. The advantage of this line of approach over the sample approximation approach is that solving the deterministic analytic upper bound will return a feasible solution to the chance constraint for sure. However, to build such an analytic upper bound, some strict conditions on structures of the function composed in the chance constraint and on the distribution of the random variables are required.}}


\subsection{Sample Approximation} We handle the chance SINR constraint via sample approximations\cite{scenario,stochastic}.
{Suppose $\left\{g_{k,j}^n\right\}_{n=1}^N$ are $N$ independent samples drawn according to the distribution of $\left\{g_{k,j}\right\}$ by the central controller}, we use \begin{equation}\label{sample}
  {\SI}_k^n(\bp):=\frac{g_{k,k}^np_k}{\eta_k+\displaystyle\sum_{j\neq
k}g_{k,j}^np_j}\geq \gamma_k,~n\in\N:=\left\{1,2,\ldots,N\right\}
\end{equation} to approximate the chance SINR constraint \eqref{chance}. {}{Since the samples are random variables, the power allocation vector $\bp$  satisfying the sampled SINR constraints \eqref{sample} is also a random variable.} Intuitively, if the sample size $N$ is sufficiently large, then the power allocation vector $\bp$ satisfying \eqref{sample} will satisfy the chance SINR constraint \eqref{chance} with high probability.

%
%
%

{}{The above intuition has been rigorously shown in \cite[Theorem 1]{uncertainty} and \cite[Theorem 1]{distributionally}. It is shown that, if the sample size $N$ satisfies
\begin{equation}\label{N-K*}N\geq N^*:=\left\lceil\frac{1}{\epsilon}\left(K-1+\ln\frac{1}{\delta}+\sqrt{2(K-1)\ln\frac{1}{\delta}+\ln^2\frac{1}{\delta}}\right)\right\rceil\end{equation} for any $\delta\in(0,1),$ then any solution to the linear system \begin{equation}\label{linearsystem}\SINR_k^n(\bp)\geq \gamma_k,~k\in\K,~n\in\N\end{equation}
 will satisfy the chance SINR constraint \eqref{chance} for all $k\in\K$ with probability at least $1-\delta.$
In particular, if $\delta$ is chosen to be a very small value, any solution to \eqref{linearsystem} will almost surely be feasible for the chance SINR constraints \eqref{chance} for all $k\in\K.$ Note that the number of samples needed will not increase significantly as $\delta$ decreases, since $N^*$ has only a logarithmic dependence on $1/\delta.$ Although the dependence of $N^*$ on $\epsilon$ is $N^*=O(1/\epsilon),$ really small values of $\epsilon$ are of no interest in the scenario considered in this paper. 
}


{}{The linear system \eqref{linearsystem} might have multiple solutions. The most interesting solution is the one that minimizes the total transmission power, i.e., the solution to the following problem
%
%
\begin{equation}\label{power}
  \begin{array}{cl}
  \displaystyle\min_{\bp} & \be^T\bp\\
    \text{s.t.} &\SINR_k^n(\bp)\geq \gamma_k,~k\in\K,~n\in\N,\\
    &\bm{0}\leq \bp\leq\bar\bp.
  \end{array}
\end{equation}
Suppose $\bp$ is the solution to problem \eqref{power}. Then, for each $k\in{\K},$ there must exist an index $n_k\in \N$ such that \begin{equation}\label{balance}\SINR_k^{n_k}\left(\bp\right)=\gamma_k.\end{equation}}{For simplicity, we will refer link $k$ to be supported if all constraints in \eqref{sample} are satisfied in the sequel.}

\subsection{Sampled Channel Normalization}\label{sec-normalization}
To facilitate the reformulation of the sample approximation of problem \eqref{cMSP} and the development of {efficient algorithms}, we normalize the sampled channel parameters. {To this end, we use} 
$$\bc_k=\left(\frac{\gamma_k\eta_k}{g_{k,k}^1\bar p_k}, \frac{\gamma_k\eta_k}{g_{k,k}^2\bar p_k},\ldots,\frac{\gamma_k\eta_k}{g_{k,k}^N\bar p_k}\right)^T\in\mathbb{R}^{N\times 1}$$ to denote the normalized noise vector of link $k$. Define \begin{equation*}\label{A}
a_{k,j}^n=\left\{\begin{array}{cl}
1,&\text{if~}k=j;\\[3pt]
\displaystyle - \frac{\gamma_kg_{k,j}^n\bar p_j}{g_{k,k}^n\bar p_k},&\text{if~}k\neq j,
\end{array}
\right.
\end{equation*} $$\bm{a}_k^n=\left(a_{k,1}^n,a_{k,2}^n,\ldots,a_{k,K}^n\right)\in\mathbb{R}^{1\times K},~n\in\N,~k\in\K,$$and
$$\bA_k=\left(\bm{a}_k^1;\bm{a}_k^2;\ldots;\bm{a}_k^N\right)\in\mathbb{R}^{N\times K},~k\in\K.$$ Notice that the entries of the $k$-th column of $\bA_k$ are one, and all the other entries are nonpositive. This special structure of $\bA_k~(k\in\K)$ will play an important role in the following algorithm design.
%
Furthermore, we let $$\bc=\left(\bc_1;\bc_2;\ldots;\bc_K\right)\in\mathbb{R}^{NK\times1}~\text{and}~\bA=\left(\bA_1;\bA_2;\ldots;\bA_K\right)\in\mathbb{R}^{NK\times K}.$$

With the above {}{notation} and \eqref{qq}, we can see that 
$\SINR_k^n(\bp)\geq\gamma_k$ for all~$n\in\N$ if and only if $\bA_k\bq\geq\bc_k.$
Consequently, {the sample approximation of problem \eqref{cMSP}} can be equivalently rewritten as
\begin{equation}\label{subproblem}
\begin{array}{cl}
\displaystyle \max_{{\bq,\,{\cal S}}} & \displaystyle |{\cal S}| - \alpha\bar \bp^T\bq\\
[5pt] \mbox{s.t.} &\displaystyle \bA_k\bq-\bc_k\geq 0,\ k\in\cal S\subseteq\cal K,\\[3pt]
&\mathbf{0}\leq \bq\leq \be.
\end{array}
\end{equation}

\subsection{Composite Group Sparse Minimization Reformulation}\label{subsec:groupsparse}
By the definition of $\|\cdot\|_0,$ the sampled JPAC problem \eqref{subproblem} can be reformulated {as} the following composite group sparse optimization problem
\begin{equation}\label{csparse2}
\begin{array}{cl}
\displaystyle \min_{\bq} & \displaystyle \sum_{k\in\K}\|\max\left\{\bc_k-\bA_k\bq,\bm{0}\right\}\|_{0}+\alpha\bar\bp^T\bq \\
\mbox{s.t.} & \displaystyle \mathbf{0}\leq
\bq\leq \be.
\end{array}
\end{equation}
Problem \eqref{csparse2} has the following property stated in Proposition \ref{mingti1}, which is mainly due to the special structure of $\bA_k.$ The proof of Proposition \ref{mingti1} can be found in Appendix \ref{app-proposition}. 
\begin{mingti}\label{mingti1}Suppose that $\bq^*$ is the solution to problem \eqref{csparse2} and link $k$ is supported at the point $\bq^*$ (i.e., $\bA_k\bq^*\geq\bc_k$). Then there must exist an index $n_k\in \N$ such that $\left(\bc_k-\bA_k\bq^*\right)_{n_k}=0.$
\end{mingti}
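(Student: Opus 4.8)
The plan is to argue by contradiction, exactly in the spirit of the Balancing Lemma (Lemma \ref{lemma-balance}), but being careful that the objective in \eqref{csparse2} has two pieces: a group-sparse counting term and the linear power term $\alpha\bar\bp^T\bq$. Suppose link $k$ is supported at $\bq^*$, i.e. $\bc_k-\bA_k\bq^*\leq\bm{0}$, but that $(\bc_k-\bA_k\bq^*)_n<0$ strictly for every $n\in\N$. I would then perturb only the $k$-th coordinate, replacing $q_k^*$ by $q_k^* - t$ for a small $t>0$, and keeping $\bq_j^*$ fixed for $j\neq k$; write the perturbed vector as $\bq^*(t)=\bq^*-t\be_k$ (or $\bq^*-t\bE_k\be$ in the paper's notation). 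The goal is to show that for $t$ sufficiently small this perturbed point is still feasible, still supports at least the same set of links, and has strictly smaller objective — contradicting optimality of $\bq^*$.

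First I would check feasibility of $\bq^*(t)$ for the box constraint $\bm 0\le\bq\le\be$: since $q_k^*>0$ (note $q_k^*=0$ would force $\bc_k-\bA_k\bq^*=\bc_k>\bm 0$ componentwise because $\bc_k$ has strictly positive entries and the $k$-th column of $\bA_k$ is all ones, contradicting that $k$ is supported), we can decrease $q_k^*$ slightly and stay in $[0,1]$. Next I would track what happens to $\bc_j-\bA_j\bq^*$ for each link $j$. For $j\neq k$: the $k$-th column of $\bA_j$ is nonpositive (the paper emphasizes this "special structure"), so decreasing $q_k$ can only decrease each entry of $\bA_j\bq$, hence can only increase each entry of $\bc_j-\bA_j\bq$ — wait, that goes the wrong way, so I would instead note that entries of $\bc_j-\bA_j\bq^*$ change continuously in $t$, so strict inequalities are preserved and, more importantly, I only need that links supported at $\bq^*$ are supported at $\bq^*(t)$; for $j\neq k$ that requires a small-$t$ continuity argument on the finitely many constraints that currently hold with equality, combined with monotonicity where the sign of the column entry is favorable. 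For link $k$ itself, by hypothesis $\bc_k-\bA_k\bq^*<\bm 0$ strictly in all $N$ components, and the $k$-th column of $\bA_k$ is $\be$, so $(\bc_k-\bA_k\bq^*(t))_n=(\bc_k-\bA_k\bq^*)_n+t$, which stays strictly negative for $t$ below $\min_n|(\bc_k-\bA_k\bq^*)_n|$; thus $k$ remains supported. Hence the group-sparse term $\sum_k\|\max\{\bc_k-\bA_k\bq,\bm 0\}\|_0$ does not increase at $\bq^*(t)$, while the linear term $\alpha\bar\bp^T\bq$ strictly decreases (it drops by $\alpha\bar p_k t>0$). This contradicts the optimality of $\bq^*$ for \eqref{csparse2}.

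The main obstacle — and the place to be most careful — is the claim that \emph{all} links supported at $\bq^*$ remain supported at $\bq^*(t)$ for small $t>0$, i.e. that the counting term truly does not increase. A link $j\neq k$ supported at $\bq^*$ means $\bc_j-\bA_j\bq^*\le\bm 0$; some components may be $0$ (indeed Proposition \ref{mingti1} applied to $j$ says at least one is). Decreasing $q_k$ changes $\bA_j\bq$ by $-t\,(\text{$k$-th column of }\bA_j)$, and since that column is nonpositive, $\bA_j\bq^*(t)\ge\bA_j\bq^*$ componentwise, hence $\bc_j-\bA_j\bq^*(t)\le\bc_j-\bA_j\bq^*\le\bm 0$: so in fact the nonpositivity of the off-diagonal columns gives the preservation \emph{for free, for all $t\ge 0$}, with no continuity argument needed. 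This is the crux where the structural assumption on $\bA_k$ (positive $k$-th column, nonpositive elsewhere) is used, and I would highlight it. The only remaining routine check is $q_k^*>0$, handled above via the strict positivity of $\bc_k$'s entries. Putting these together yields the contradiction and proves the proposition.
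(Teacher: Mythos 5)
Your proof is correct and follows essentially the same route as the paper's: argue by contradiction, decrease only the $k$-th coordinate of $\bq^*$ (the paper drops $q_k^*$ all the way to $\max\{(\bE_k-\bA_k)\bq^*+\bc_k\}$ while you use a small step $t>0$, which is immaterial), and use the sign structure of $\bA_j$ (nonpositive off-diagonal columns) to show every supported link remains supported, so the counting term does not increase while $\alpha\bar\bp^T\bq$ strictly decreases, contradicting optimality. The only nitpick is that $q_k^*=0$ forces $\bc_k-\bA_k\bq^*\geq\bc_k>\bm{0}$ rather than equality, but this does not affect your conclusion.
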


 Proposition \ref{mingti1} implies that problem \eqref{csparse2} can be viewed as an (nontrivial) extension of problem \eqref{sparse2}. In fact, we know from Proposition \ref{mingti1} that when {$N=1$}, the solution of problem \eqref{csparse2} satisfies $\bc-\bA\bq^*\geq\bm{0}$ (i.e., $(\bc-\bA\bq^*)_k=0$ for supported links and $(\bc-\bA\bq^*)_k>0$ for unsupported links), and problem \eqref{csparse2} reduces to problem \eqref{sparse2}. Since problem \eqref{sparse2} is NP-hard to solve to global optimality and NP-hard to approximate to constant factor of global optimality\cite{convex_approximation,removals,msp}, it follows that problem \eqref{csparse2} is also NP-hard to solve and approximate.

{}{A key difference between problems \eqref{csparse2} and \eqref{sparse2} lies in the $\max$ operator introduced in problem \eqref{csparse2}. In problem \eqref{sparse2}, if link $k$ is supported, then $\bc_k-\bA_k\bq^*$ is a scalar and equals zero; while in problem \eqref{csparse2}, if link $k$ is supported, then $\bc_k-\bA_k\bq^*\leq \bm{0}$ but not necessarily equal to zero. Therefore, to correctly formulate the JPAC problem, we introduce a $\max$ operation and put $\max\left\{\bc_k-\bA_k\bq,\bm{0}\right\}$ in $\|\cdot\|_0$ instead of $\bc_k-\bA_k\bq;$ see problem \eqref{csparse2}. Notice that in the sparse formulation it is desirable that a link is supported if and only if the corresponding $\ell_0$-quasi-norm is zero. To further illustrate this, we give the following example, where $K=N=2,$ and
$$\bA=(\bA_1;\bA_2)=\left(
        \begin{array}{cc}
          1 & -0.2 \\
          1 & -0.5 \\
          -0.3 & 1 \\
          -0.5 & 1 \\
        \end{array}
      \right),
~\bc=\left(
       \begin{array}{c}
         0.5 \\
         0.5 \\
         0.5 \\
         0.5 \\
       \end{array}
     \right).$$
It can be checked that the only possible way to simultaneously support the two links $\left\{1,2\right\}$ is $\bq^*=\be$ and
$\max\left\{\bc_1-\bA_1\bq^*, \bm{0}\right\}=\max\left\{\bc_2-\bA_2\bq^*, \bm{0}\right\}=\bm{0}$ but $\bc_1-\bA_1\bq^*\neq \bm{0}$ and $\bc_2-\bA_2\bq^*\neq \bm{0}.$}

\section{Efficient Deflation Approaches for the Sampled JPAC Problem}
In this section, we develop efficient convex approximation-based deflation algorithms {}{for solving} the sampled JPAC problem \eqref{csparse2}. As can be seen, problem \eqref{csparse2} has a discontinuous objective function due to the first term. However, it allows for an efficient convex approximation. {We} first approximate problem \eqref{csparse2} by a convex problem, which is actually equivalent to {an} SOCP, and then design efficient algorithms for solving the approximation problem. The solution to the approximation problem can be used to check the simultaneous supportability of all links in the network and to guide an iterative link removal procedure (the deflation {}{approach}). We conclude this section with two convex approximation-based deflation algorithms for solving the sampled joint control problem \eqref{csparse2}.
\subsection{Convex Approximation}
{}{Recall that problem \eqref{csparse2} aims to find a feasible $\bq$ such that the vector $\bx=\left(\bx_1;\bx_2;\ldots;\bx_K\right)$ is as sparse as possible in the group sense, where $\bx_k:=\max\left\{\bc_k-\bA_k\bq,\bm{0}\right\}$. 
The nonsmooth mixed $\ell_2/\ell_1$ norm, $\sum_{k\in\K}\left\|\bx_k\right\|_{2},$ is shown in \cite{group} to be quite effective in characterizing and inducing the \emph{group sparsity}  of the vector $\bx.$
To understand this, observe that $\sum_{k\in\K}\left\|\bx_k\right\|_{2},$ the $\ell_1$ norm of the vector $\left(\|\bx_1\|_2,\|\bx_2\|_2,\ldots,\|\bx_K\|_2\right)^T,$ is a good approximation of its $\ell_0$ norm, which is equal to the $\ell_0$ norm of the vector $\left(\|\bx_1\|_0,\|\bx_2\|_0,\ldots,\|\bx_K\|_0\right)^T.$ More discussions on using the mixed $\ell_2/\ell_1$ norm to recover the group sparsity can be found in \cite{group}.}

{}{Motivated by the above discussion and the NP-hardness of problem \eqref{csparse2}, we consider the convex approximation of problem \eqref{csparse2} as follows:}
\begin{equation}\label{relax}
\begin{array}{cl}
\displaystyle \min_{\bq} & \displaystyle f(\bq):=\sum_{k\in\K}\left\|\max\left\{\bc_k-\bA_k\bq,\bm{0}\right\}\right\|_{2}+\alpha\bar\bp^T\bq \\
\mbox{s.t.} & \displaystyle \mathbf{0}\leq
\bq\leq \be.
\end{array}
\end{equation} {}{The convexity of the objective function of problem \eqref{relax} follows directly from \cite[Section 3.2.5]{boyd}.}

%

Compared to problem \eqref{csparse2}, the objective function of problem \eqref{relax} is continuous in $\bq$, but still nonsmooth. We give the subdifferential \cite{analysis} of the function $\left\|\max\left\{\bc_k-\bA_k\bq,\bm{0}\right\}\right\|_{2}$ in Proposition \ref{subgradient2}, which is important in the following analysis and algorithm design. 
The proof of Proposition \ref{subgradient2} is provided in Appendix \ref{app-1}.

\begin{mingti}\label{subgradient2}
  {}{Define $h_k(\bq)=\|\max\left\{\bc_k-\bA_k\bq,
  \bm{0}\right\}\|_2$. Suppose $\bc_k-\bA_k\bar\bq\leq\bm{0}$ and ${\N_k}^{=}:=\left\{n\,|\,(\bc_k-\bA_k\bar\bq)_n=0\right\}\neq\emptyset,$
  then $$\partial h_k(\bm{\bar\bq})=\left\{-\sum_{n\in\N_k^{=}}s_n\left(\bm{a}_k^n\right)^T\,|\,s_n\geq{0}, \sum_{n\in\N_k^{=}}s_n^2\leq 1\right\}.$$
  In particular, if $\N_k^{=}=\N,$ then $\partial h_k(\bm{\bar\bq})=\left\{-\bA_k^T\bs\,|\,\bs\geq\bm{0}, \|\bs\|_2\leq 1\right\}.$} {Further, if 
  ${\N_k}^{+}:=\left\{n\,|\,(\bc_k-\bA_k\bar\bq)_n>0\right\}\neq\emptyset$,} {then}
  \begin{equation}\label{differentiable}\begin{array}{rl}\nabla h_k(\bm{\bar\bq})\!\!&=\frac{-\sum_{n\in\N_k^+}\left(\bc_k-\bA_k\bar\bq\right)_n\left(\bm{a}_k^n\right)^T}{\|\max\left\{\bc_k-\bA_k\bar\bq, \bm{0}\right\}\|_2}=\frac{-\bA_k^T\max\left\{\bc_k-\bA_k\bar\bq, \bm{0}\right\}}{\|\max\left\{\bc_k-\bA_k\bar\bq, \bm{0}\right\}\|_2}.\end{array}\end{equation} 
\end{mingti}

We now discuss the choice of the parameter $\alpha$ in \eqref{relax}. The parameter $\alpha$ in \eqref{relax} should be chosen appropriately such that the following ``Never-Over-Removal'' property is satisfied: the solution of problem \eqref{relax} should simultaneously support all links at their desired SINR targets with minimum total transmission power as long as all links in the network are simultaneously supportable. Otherwise, since the solution of \eqref{relax} will be used to check the simultaneous supportability of all links and to guide the links' removal, it may mislead us to remove the links unnecessarily. {}{Notice that problem \eqref{relax} with $\alpha=0$ indeed can simultaneously support all links as long as the links are simultaneously supportable but not necessarily with minimum total transmission power, i.e., the solution of problem \eqref{relax} with $\alpha=0$ might not solve \eqref{power}.} {Theorem \ref{neverover}} gives {an interval} of the parameter $\alpha$ {to guarantee the ``Never-Over-Removal'' property.} {The proof of Theorem \ref{neverover} {(see Appendix \ref{app-neverover})} is mainly based on Proposition \ref{subgradient2}.}

\begin{dingli}\label{neverover}
  Suppose there exists some vector $\bq$ such that
$\bm{0}\leq\bq\leq\be$ and
  $\bA\bq\geq\bc${. Then} any solution of problem \eqref{relax} with \begin{equation}\label{alpha2}{}{0<\alpha\leq\alpha_2:=\frac{\min\left\{\bc\right\}}{K\max\left\{\bar\bp\right\}}}\end{equation} can simultaneously support all links at their desired SINR targets with minimum total transmission power.
\end{dingli}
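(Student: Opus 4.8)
The plan is to argue by contradiction: assume all links are simultaneously supportable (so the feasible set of problem \eqref{relax} contains a point $\bq$ with $\bA\bq\geq\bc$, making the first term of the objective vanish there), but the optimal solution $\bq^*$ of problem \eqref{relax} fails to support some link. I would first observe that, because there exists a feasible $\bq$ with $\bA\bq \ge \bc$, the optimal value of \eqref{relax} is at most $\alpha\bar\bp^T\bq \le \alpha\,\be^T\bar\bp\,\max\{\be\}=\alpha\,\be^T\bar\bp$; more carefully, among all $\bq$ with $\bA\bq\ge\bc$ there is one minimizing $\bar\bp^T\bq$, call it $\bq_{\min}$, and the optimal value of \eqref{relax} is at most $\alpha\,\bar\bp^T\bq_{\min}$. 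The real content is to show that at the optimum $\bq^*$ we must have $\bc_k-\bA_k\bq^*\le\bm 0$ for every $k$, i.e.\ $h_k(\bq^*)=0$ for all $k$; once that is established, $\bq^*$ supports all links, and a short separate argument (decreasing any components that overshoot, using the Balancing Lemma \ref{lemma-balance} together with the special column structure of $\bA_k$) shows $\bq^*$ in fact achieves the minimum total power among all fully-supporting allocations, because the $\alpha\bar\bp^T\bq$ term is then the only active part of the objective.

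The heart of the argument is a first-order (subgradient) optimality comparison. Suppose for contradiction that the set $\mathcal{U}=\{k : h_k(\bq^*)>0\}$ is nonempty. Consider the line segment from $\bq^*$ toward $\bq_{\min}$, i.e.\ $\bq(t)=\bq^*+t(\bq_{\min}-\bq^*)$ for $t\in[0,1]$, which stays feasible since the box $\mathbf 0\le\bq\le\be$ is convex. Along this direction the smooth term $\alpha\bar\bp^T\bq$ changes at rate $\alpha\bar\bp^T(\bq_{\min}-\bq^*)$, which is bounded in magnitude by $\alpha\,\be^T\bar\bp\le\alpha K\max\{\bar\bp\}$. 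For each $k\in\mathcal U$, Proposition \ref{subgradient2} gives the gradient $\nabla h_k(\bq^*)=-\bA_k^T\max\{\bc_k-\bA_k\bq^*,\bm 0\}/\|\max\{\bc_k-\bA_k\bq^*,\bm 0\}\|_2$, and the directional derivative of $h_k$ at $\bq^*$ along $\bq_{\min}-\bq^*$ equals $\langle\nabla h_k(\bq^*),\bq_{\min}-\bq^*\rangle$. Writing $\bx_k^*=\max\{\bc_k-\bA_k\bq^*,\bm 0\}\neq\bm 0$, this directional derivative is $-\bigl(\bx_k^*\bigr)^T\bA_k(\bq_{\min}-\bq^*)/\|\bx_k^*\|_2 = -\bigl(\bx_k^*\bigr)^T\bigl((\bc_k-\bA_k\bq^*)-(\bc_k-\bA_k\bq_{\min})\bigr)/\|\bx_k^*\|_2$. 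Since $\bA_k\bq_{\min}\ge\bc_k$, the vector $\bc_k-\bA_k\bq_{\min}$ has all nonpositive entries, while $\bx_k^*\ge\bm 0$; and on the support of $\bx_k^*$ we have $(\bc_k-\bA_k\bq^*)=\bx_k^*$ entrywise. Hence $\bigl(\bx_k^*\bigr)^T(\bc_k-\bA_k\bq^*)=\|\bx_k^*\|_2^2$ and $\bigl(\bx_k^*\bigr)^T(\bc_k-\bA_k\bq_{\min})\le 0$, so the directional derivative of $h_k$ is $\le -\|\bx_k^*\|_2^2/\|\bx_k^*\|_2=-\|\bx_k^*\|_2<0$. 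Thus moving from $\bq^*$ toward $\bq_{\min}$ strictly decreases every active $h_k$ at rate at least $\|\bx_k^*\|_2$, while increasing the power term at rate at most $\alpha K\max\{\bar\bp\}$.

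To close the contradiction I need the drop in the first term to dominate, and this is where the bound $\alpha\le\alpha_2=\min\{\bc\}/(K\max\{\bar\bp\})$ enters. For a link $k\in\mathcal U$, at least one entry of $\bx_k^*$ is positive; I would lower-bound that entry, and hence $\|\bx_k^*\|_2$, by $\min\{\bc\}$ — the idea being that the column structure of $\bA_k$ (the $k$-th column is all ones, other entries nonpositive) together with $\bm 0\le\bq^*\le\be$ forces $(\bA_k\bq^*)_n\le (\bq^*)_k\le 1$, so $(\bc_k-\bA_k\bq^*)_n\ge (\bc_k)_n-1$; this alone is not quite the clean bound $\min\{\bc\}$, so the cleanest route is instead to note that if link $k$ is \emph{not} supported then necessarily $(\bq^*)_k<1$, and compare against the allocation that raises $(\bq^*)_k$ to $1$ — but I will ultimately package it so that the total first-order decrease $\sum_{k\in\mathcal U}\|\bx_k^*\|_2 \ge \min\{\bc\}$ outweighs $\alpha K\max\{\bar\bp\}\le\min\{\bc\}$, contradicting optimality of $\bq^*$ unless equality holds throughout, which can then be ruled out by the strictness $\|\bx_k^*\|_2>0$. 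The main obstacle I anticipate is exactly this last quantitative step: getting a clean, honest lower bound of the form $\|\bx_k^*\|_2\ge\min\{\bc\}$ (or an aggregate version over $\mathcal U$) purely from the structure of $\bA_k$ and the box constraint, rather than from an ad hoc perturbation — this is the delicate part where the precise value of $\alpha_2$ is pinned down, and it is where I expect the proof in Appendix \ref{app-neverover} leans most heavily on Lemma \ref{lemma-balance} and Proposition \ref{subgradient2}.
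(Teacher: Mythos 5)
Your strategy is genuinely different from the paper's: you argue by contradiction at a hypothetical non-supporting optimizer $\bq^*$ of \eqref{relax}, moving along the segment toward the minimum-power supporting point $\bq_{\min}$, whereas the paper never looks at a violating point at all. It takes the minimum-power supporting allocation $\bar\bq$, invokes Lemma \ref{lemma-balance} to select one active sampled row per link, forming $\bA_{\I}$ with $\bA_{\I}\bar\bq=\bc_{\I}$, uses the sign structure to conclude $\bA_{\I}^{-1}\geq\bm{0}$ and $\be^T\bA_{\I}^{-1}\bc_{\I}=\be^T\bar\bq\leq K$, and then exhibits the explicit choice $\bs=\alpha\bA_{\I}^{-T}\bar\bp$, which lies in $[0,1]^K$ exactly when $\alpha\leq\alpha_2$; by Proposition \ref{subgradient2} this makes $-\bA_{\I}^T\bs+\alpha\bar\bp=\bm{0}$ a subgradient of the objective at $\bar\bq$, so $\bar\bq$ solves \eqref{relax}. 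The value of $\alpha_2$ thus comes from the M-matrix-type inverse-nonnegativity of the active constraint matrix together with $\be^T\bar\bq\leq K$, not from any lower bound on constraint violations.

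This matters because the step you yourself flag as the obstacle is a genuine gap, and it is not just delicate but unavailable by your route. Your directional computations are correct (rate at most $-\|\bx_k^*\|_2$ for each violated group, rate at most $\alpha K\max\{\bar\bp\}$ for the power term), but the needed inequality $\sum_{k\in\mathcal{U}}\|\bx_k^*\|_2\geq\min\{\bc\}$ is false as a structural statement: a feasible point obtained by slightly perturbing a supporting allocation violates a sampled constraint by an arbitrarily small amount, so no such pointwise lower bound follows from the structure of $\bA_k$ and the box; your attempted bound $(\bc_k-\bA_k\bq^*)_n\geq(\bc_k)_n-1$ is typically negative and useless. And at the hypothetical optimum, first-order optimality along $\bq_{\min}-\bq^*$ only yields the reverse inequality $\sum_{k\in\mathcal{U}}\|\bx_k^*\|_2\leq\alpha\bar\bp^T(\bq_{\min}-\bq^*)$, i.e., an upper bound on the violation, which is not a contradiction. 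So the proposal as written does not close, and the missing idea is precisely the paper's mechanism (Balancing Lemma plus nonnegativity of $\bA_{\I}^{-1}$) that pins down $\alpha_2=\min\{\bc\}/(K\max\{\bar\bp\})$. Two smaller points: your final "minimum total power" claim needs $\alpha>0$ (with $\alpha=0$ it follows only for the specific point $\bar\bq$, which is how the paper phrases its argument), and once the optimum is known to support all links, minimality of $\bar\bp^T\bq^*$ follows by directly comparing objective values with any other supporting allocation, so no separate "decrease the overshooting components" argument is needed.
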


Combining \eqref{alpha1} and \eqref{alpha2}, we propose to choose
the parameter $\alpha$ in \eqref{relax} according to
\begin{equation}\label{optalpha}\alpha=\min\left\{c_1\alpha_1,\,c_2\alpha_2\right\},\end{equation}
where $c_1, c_2\in (0,\, 1)$ are two constants.

{}{\emph{Link Removal Strategy.} The solution of problem \eqref{relax} can be used to guide the link removal process.
In particular, by solving \eqref{relax} with $\alpha$ given in
\eqref{optalpha}, we know whether all links in the network can be
simultaneously supported by simply checking if its solution $\bar\bq$ satisfies
$\bA\bar\bq\geq\bc$. Furthermore, if all links in the network cannot be simultaneously supported, we need to remove at least one link from the network. In particular, picking the worst sampled channel index $\bar n_k=\arg\max\left\{\bc_k-\bA_k\bar\bq\right\}$, we remove the link with the largest interference plus noise footprint
\begin{equation}\label{removalpower}k=\arg\max\left\{ \sum_{j\neq k}|a_{k,j}^{\bar n_k}|\bar q_j+\sum_{j\neq k}|a_{j,k}^{\bar n_j}|\bar q_k+c_k^{\bar n_k}\right\}.\end{equation}

In the next subsection, we design efficient algorithms to solve the convex but nonsmooth problem \eqref{relax}.}




\subsection{{Solution for Approximation Problem \eqref{relax}}}
By introducing {auxiliary variables} $\bx=\left(\bx_1;\bx_2;\ldots;\bx_K\right)$ and $\bt=(t_1;t_2;\ldots;t_K),$ 
problem \eqref{relax} can be transformed into the following SOCP
\begin{equation}\label{relax2}
\begin{array}{cl}
\displaystyle \min_{\bq,\bx,\bt} & \displaystyle \sum_{k\in\K}t_k+\alpha\bar\bp^T\bq \\
\mbox{s.t.} & \|\bx_k\|\leq t_k,~k\in\K{,}\\
&\bc-\bA\bq\leq\bx,\\
&\bm{0}\leq \bx,\\
&\displaystyle \mathbf{0}\leq
\bq\leq \be,
\end{array}
\end{equation} which can be solved by using the standard solver {like} CVX \cite{cvx}. {However, it is not an efficient way of solving problem \eqref{relax} by solving its equivalent SOCP reformulation \eqref{relax2}, since both the number of constraints and the number of unknown variables of problem \eqref{relax2} are of order $O(NK)$ while \eqref{N-K*} suggests that the sample size $N$ is generally very large.}


Next, we develop a custom-design algorithm for problem \eqref{relax} by first {smoothing the problem and then applying} the efficient projected alternate Barzilai-Borwein (PABB) algorithm \cite{BB_quad,BB2} {to solve its smooth counterpart}. {More specifically, we smooth problem \eqref{relax} by}
\begin{equation}\label{relax-appro}
\begin{array}{cl}
\displaystyle\min_{\bq} & \tilde f(\bq,\mu)=\displaystyle\sum_{k\in\K}\sqrt{\left\|\max\left\{\bc_k-\bA_k\bq,\bm{0}\right\}\right\|_{2}^2+\mu^2}+\alpha\bar\bp^T\bq \\
\mbox{s.t.} & \displaystyle \mathbf{0}\leq
\bq\leq \be,
\end{array}
\end{equation}where $\mu>0$ is the {smoothing} parameter. By \eqref{differentiable} in Proposition \ref{subgradient2}, the objective function $\tilde f(\bq,\mu)$ \ of problem \eqref{relax-appro} is differentiable everywhere with respect to $\bq$ and its gradient is given by
$$\nabla \tilde f(\bq,\mu)=\sum_{k\in\K}\frac{-\bA_k^T\max\left\{\bc_k-\bA_k\bq,\bm{0}\right\}}{\sqrt{\left\|\max\left\{\bc_k-\bA_k\bq,\bm{0}\right\}\right\|_{2}^2+\mu^2}}{+\alpha\bar\bp}.$$

{It can be shown that, as the parameter $\mu$ tends to zero, $\tilde f(\bq,\mu)$ \emph{uniformly} {}{converges to} $f(\bq)$ in \eqref{relax} and the solution of the smoothing problem \eqref{relax-appro} also converges to the one of problem \eqref{relax}; see Section II of \cite{technicalreport}. Therefore, when the parameter $\mu$ is very close to zero, the solution of problem \eqref{relax-appro} will be very close to the one of problem \eqref{relax}. 
%

{}{We apply the PABB algorithm \cite{BB_quad, BB2} to solve the {smoothing}
problem \eqref{relax-appro}. Three distinctive advantages of the PABB algorithm in the context of solving problem \eqref{relax-appro} are as follows. First, the box constraint is easy to project onto, and thus the PABB algorithm can be easily implemented to solve problem \eqref{relax-appro}. Second, the PABB algorithm requires only the gradient information but not the high-order derivative information, which makes it suitable for solving large-scale optimization problem \eqref{relax-appro}. Last but not least, the PABB algorithm enjoys a quite good numerical performance due to the use of the BB stepsize \cite{BB2}.}
When using the PABB algorithm to solve problem \eqref{relax-appro}, we
employ the \emph{continuation} technique \cite{continuation,program}. {}{That is, to obtain an
approximate solution of \eqref{relax}, we solve \eqref{relax-appro} with a series of gradually decreasing values for $\mu$, instead of using a tiny fixed $\mu$.} The continuation technique can reasonably improve the computational efficiency. {}{Solving problem \eqref{relax} by the PABB algorithm (combined with the smoothing and continuation techniques) is much faster than solving its SOCP reformulation \eqref{relax2} by the standard SOCP solver. Simulation results will be given later in Section \ref{sec-simulation}.}

\subsection{Convex Approximation-Based Deflation Algorithms}
The basic idea of the proposed convex approximation-based deflation algorithm for the sampled JPAC problem \eqref{csparse2} is to solve the power control problem
\eqref{relax} and check whether all links can be supported or not; if not, remove
a link from the network, and solve a reduced problem \eqref{relax} again until all the remaining
links are supported. 

{}{As in \cite{msp}, to accelerate the deflation procedure (avoid solving too many optimization problems in
the form of \eqref{relax}), we can derive an easy-to-check necessary
condition for all links in the network to be simultaneously
supported. It is easy to verify that the condition
\begin{equation}\label{necessary}
\bm{\mu}_+^\T\be-\left(\bm{\mu}_-^T\bc^{\max}+\be^\T\bc\right)\geq0
\end{equation}
is necessary for all links to be simultaneously
supported, where $\bm{\mu}_+=\max\left\{\bm{\mu},\mathbf{0}\right\},$ $\bm{\mu}_-=\max\left\{-\bm{\mu},\mathbf{0}\right\}$, $\bm{\mu}=\bA^T\be,$ and $\bc^{\max}=\left(\max\left\{\bc_1\right\};\max\left\{\bc_2\right\};\ldots;\max\left\{\bc_K\right\}\right).$
If
\eqref{necessary} is violated, we remove the link $k_0$ according to
\begin{equation}\label{smart}
k_0=\arg\max_{k\in {\cal K}}\left\{\sum_{j\ne k
}|\bar a_{k,j}|+\sum_{j\ne k}|\bar a_{j,k}|+\bar c_k\right\},
\end{equation}
which corresponds to applying the SMART rule \cite{removals} to the normalized sampled channel and substituting $\bq=\be.$ In \eqref{smart}, $\bar a_{k,j}$ and $\bar c_k$ are the averaged sample channel gain and noise, i.e.,
$$\bar {\bm{a}}_k=\left(\bar a_{k,1}, \bar a_{k,2},\ldots,\bar a_{k,K} \right)=\frac{\be^T\bA_k}{N},~\bar c_k=\frac{\be^T\bc_k}{N},~k\in\K.$$}

%

{The proposed convex approximation-based deflation {}{algorithmic} framework for problem \eqref{csparse2} is described in {Algorithm 1}.
{}{It is worthwhile remarking the {}{difference} between the proposed Algorithm 1 and the NLPD algorithm in \cite{msp}. The first key difference is that Algorithm 1 is designed for solving the sample approximation of {}{the} chance SINR constrained JPAC problem \eqref{cMSP} (i.e., problem \eqref{csparse2}) while the NLPD algorithm is designed for solving {}{the} instantaneous SINR constrained JPAC problem \eqref{sparse2}. As discussed in Subsection \ref{subsec:groupsparse}, problem \eqref{csparse2} includes problem \eqref{sparse2} as a special case. The second key difference between the two algorithms lies in the power control step (i.e., \textbf{Step 3}). More specifically, at each iteration, the proposed Algorithm 1 solves the SOCP \eqref{relax} to update the power while the NLPD algorithm solves the LP \eqref{ll1} to update the power. We also remark that the SOCP approximation \eqref{relax} used in Algorithm 1 is different from the one used in \cite{convex_approximation}. The two SOCP approximations take different forms and are derived from different perspectives.} 

\begin{center}
\framebox{
\begin{minipage}{14.5cm}
\flushright
\begin{minipage}{14.5cm}
\centerline{\bf Algorithm 1: A Convex Approximation-Based Deflation {}{Algorithmic} Framework}
\vspace{0.05cm} \textbf{Step 1.} Initialization: Input
data
$\left(\bA,\bc,\bar\bp\right).$\\[2.5pt]
\textbf{Step 2.} Preprocessing: Remove link $k_0$ iteratively
according to \eqref{smart} until condition \eqref{necessary} holds
true.\\[2.5pt]
\textbf{Step 3.} Power control: Compute parameter $\alpha$ by
\eqref{optalpha} and solve problem \eqref{relax}; check whether all
links are supported: if yes, go to \textbf{Step 5}; else
go to \textbf{Step 4}. \\[2.5pt]
\textbf{Step 4.} Admission control: Remove link  $k_0$ according to
\eqref{removalpower}, set ${\K}={\K}\setminus\left\{k_0\right\},$ and go to
\textbf{Step 3}.\\[2.5pt]
\textbf{Step 5.} Postprocessing: Check the removed links for
possible admission.
\end{minipage}
\end{minipage}
}
\end{center}

In the above framework, if the power control problem \eqref{relax} is solved via {solving its equivalent SOCP reformulation \eqref{relax2}}, we call the corresponding algorithm SOCP-D; while if problem \eqref{relax} is solved via {using the PABB algorithm to solve its smoothing counterpart \eqref{relax-appro}}, we call the corresponding algorithm PABB-D.} The SOCP-D algorithm is of polynomial time complexity, i.e., it has a complexity of ${}{O(N^{3.5}K^{4.5})},$ since it needs to solve at most $K$  SOCP problems \eqref{relax2} and {solving one SOCP problem in the form of \eqref{relax2} requires ${}{O(N^{3.5}K^{3.5})}$ operations} \cite[Page 423]{com3}. {}{It is hard to analyze the complexity of the PABB-D algorithm. This is because global (linear) convergence rate of the PABB algorithm, when it is used to solve general nonlinear optimization problems, remains unknown \cite{bb}.} The postprocessing step (\text{Step 5}) aims at admitting the
links removed in the preprocessing and admission control steps\cite{convex_approximation,msp}. A specification of the postprocessing step can be found in Section III of \cite{technicalreport}.


%
%
%
%
%
%
%

\section{{}{Numerical Simulations}}\label{sec-simulation}
To illustrate the effectiveness and efficiency of the two proposed convex approximation-based deflation algorithms (SOCP-D and PABB-D), we present some numerical simulation results in this section. The number of supported links, the total
transmission power, and the execution CPU time are used as the
metrics for {comparing} different algorithms.

{\emph{Simulation Setup:} 
As in \cite{convex_approximation}, {}{each transmitter's location
obeys the uniform distribution over a $D_1$ Km~$\times$~$D_1$ Km square
and the location of each receiver is uniformly generated in a disc
with center at its corresponding transmitter and radius $D_2$ Km,
excluding a radius of $10$ m.} Suppose that the channel coefficient $h_{k,j}$ is generated from the Rician channel model\cite{goldsmith}, i.e.,
\begin{equation}\label{channel}h_{k,j}=\left(\sqrt{\frac{\kappa}{\kappa+1}}+\sqrt{\frac{1}{\kappa+1}}\zeta_{k,j}\right)\frac{1}{d_{k,j}^2},~\forall~k,~j\in\K,\end{equation}
where $\zeta_{k,j}$ obeys the standard complex Gaussian distribution, i.e., $\zeta_{k,j}\sim{\cal{CN}}({0}, 1),$ $d_{k,j}$ is the Euclidean distance from the link of transmitter $j$ to the link of
receiver $k,$ and $\kappa$ is the ratio of the power in the line of sight (LOS) component to the power in the other (non-LOS)
multipath components. For $\kappa=0$ we have Rayleigh fading and for $\kappa=\infty$ we have no
fading (i.e., a channel with no multipath and only a LOS component). The parameter $\kappa$ therefore is a measure of the severity of the channel fading: a small $\kappa$ implies severe fading and a large $\kappa$ implies relatively mild fading. The channel gain $\left\{g_{k,j}\right\}$~are set to be:
\begin{equation}\label{gain}{}{g_{k,j}=\left|h_{k,j}\right|^2=\left|\sqrt{\frac{\kappa}{\kappa+1}}+\sqrt{\frac{1}{\kappa+1}}\zeta_{k,j}\right|^2\frac{1}{d_{k,j}^4},~k,~j\in\K.}\end{equation}
%
Each link's SINR target is set to be
$\gamma_k=2~\text{dB}~(\forall~k\in\K),$ each link's noise power is set to be $\eta_k=-90~\text{dB}~(\forall~k\in\K),$ and the power budget of
the link of transmitter $k$ is set to be \begin{equation}\label{powerbudget}\bar p_k=b\underline p_k,~k\in\K,\end{equation} where $\underline p_k$ is the minimum power needed
by link $k$ to meet its SINR requirement in the absence of any interference
from other links when $\kappa=+\infty$ in \eqref{channel}.} 

{\emph{Benchmark:} When $\kappa=+\infty$, there is no uncertainty of channel gains, and the number of supported links in this case should be greater than or equal to the number of supported links under the same channel conditions except where $\kappa<+\infty$. In addition, if the number of supported links under these two cases are equal to each other, the total transmission power in the former channel condition should be less than the one in the latter channel condition. In fact, when $\kappa=+\infty$, the corresponding JPAC problem \eqref{csparse2} reduces to problem \eqref{sparse2}, which can be solved efficiently by the NLPD algorithm in \cite{msp}. The solution given by the NLPD algorithm will be used as the benchmark to compare with the two proposed algorithms\footnote{{We remark that this is the first paper that addresses the JPAC problem based on the CDI assumption without specifying any particular distribution, and there is no existing algorithms dealing with the same issue that we can compare the proposed algorithms with.}}, since the NLPD algorithm {was} reported to have the close-to-global-optimal performance {}{in terms of the number of supported links} in \cite{msp}.}



\emph{Choice of Parameters:} We set the parameters $\epsilon,~\delta,$ and $K$ in \eqref{N-K*} to be $0.1,$ $0.05,$ and $10,$ respectively. {We remark that $K$ in equation \eqref{N-K*} is the number of supported links but not the number of total links.} Substituting these parameters in \eqref{N-K*}, we obtain $N^*=200,$ and we set $N=200$ in all of our simulations. Both of the parameters $c_1$ and $c_2$ in \eqref{optalpha} are set to be $0.999.$
  {}{We do simulations in two different setups where $(D_1,D_2)=(2,0.4)$ and $(D_1,D_2)=(1,0.2).$ For convenience, we call the former setup as Setup1 and the latter one as Setup2. Notice that Setup2 represents a dense network where the distance between the transmitters and receivers are closer (i.e., half of that of the Setup1). Under each setup, {we test three different sets of parameters, where one is $(\kappa,b)=(+\infty,2),$ one is $(\kappa,b)=(100,4),$ and another one is $(\kappa,b)=(10,40).$}}~
Finally, we use CVX\cite{cvx} to solve the SOCP problems in the SOCP-D algorithm.



{\begin{table*}[!ht]
\caption{{}{Statistics of the Number of Supported Links of $200$ Monte-Carlo Runs.}}\label{table1}\centering
 \begin{tabular}{|c|c|c|c|}

  \hline\hline
   Parameters $(K,D_1,D_2,\kappa,b)$ & Algorithm  & Statistics of the Number of Supported Links \\ \hline
     $(4,2,0.4,+\infty,2)$ &Benchmark & 664=2*19+3*98+4*83 \\  \hline
     $(4,2,0.4,100,4)$ &SOCP-D/PABB-D & 659=2*19+3*103+4*78 \\ \hline
     $(4,2,0.4,10,40)$ &SOCP-D/PABB-D & 609=1*1+2*39+3*110+4*50 \\ \hline

     $(12,2,0.4,+\infty,2)$ &Benchmark & 1468=5*8+6*36+7*68+8*60+9*24+10*4 \\ \hline
     $(12,2,0.4,100,4)$ &SOCP-D/PABB-D & 1431=5*12+6*42+7*72+8*54+9*17+10*3  \\ \hline
     $(12,2,0.4,10,40)$ &SOCP-D/PABB-D & 1236=4*10+5*39+6*79+7*53+8*15+9*4  \\ \hline

     $(20,2,0.4,+\infty,2)$ &Benchmark & 1953=6*1+7*8+8*21+9*50+10*67+11*38+12*11+13*3+14*1 \\ \hline
     $(20,2,0.4,100,4)$ &SOCP-D/PABB-D & 1864=7*13+8*40+9*58+10*58+11*22+12*8+13*1 \\ \hline
    $(20,2,0.4,10,40)$ &SOCP-D/PABB-D & 1570=5*2+6*17+7*65+8*60+9*39+10*15+11*2 \\ \hline

    $(28,2,0.4,+\infty,2)$ &Benchmark  & 2342=8*1+9*13+10*23+11*54+12*56+13*31+14*17+15*2+16*1+17*2 \\ \hline
   $(28,2,0.4,100,4)$ &SOCP-D/PABB-D & 2250=8*3+9*21+10*40+11*50+12*50+13*22+14*11+15*1+16*2\\ \hline
     $(28,2,0.4,10,40)$ &SOCP-D/PABB-D & 1875=5*1+6*1+7*13+8*37+9*60+10*48+11*25+12*14+13*0+14*1\\ \hline

    $(4,1,0.2,+\infty,2)$ &Benchmark & 639=2*26+3*109+4*65 \\  \hline
     $(4,1,0.2,100,4)$ &SOCP-D/PABB-D & 632=1*1+2*27+3*111+4*61 \\ \hline
    $(4,1,0.2,10,40)$ &SOCP-D/PABB-D & 589=1*2+2*44+3*117+4*37 \\ \hline

     $(12,1,0.2,+\infty,2)$ &Benchmark & 1443=4*3+5*4+6*38+7*85+8*45+9*22+10*3 \\ \hline
     $(12,1,0.2,100,4)$ &SOCP-D/PABB-D & 1403=4*3+5*11+6*45+7*82+8*41+9*16+10*2  \\ \hline
    $(12,1,0.2,10,40)$ &SOCP-D/PABB-D & 1214=3*2+4*8+5*56+6*64+7*52+8*14+9*4  \\ \hline

     $(20,1,0.2,+\infty,2)$ &Benchmark & 1942=6*1+7*3+8*29+9*49+10*69+11*37+12*11+13*1 \\ \hline
    $(20,1,0.2,100,4)$ &SOCP-D/PABB-D & 1882=6*1+7*10+8*32+9*65+10*58+11*23+12*11 \\ \hline
    $(20,1,0.2,10,40)$ &SOCP-D/PABB-D & 1577=5*4+6*18+7*54+8*59+9*55+10*6+11*4 \\ \hline

     $(28,1,0.2,+\infty,2)$ &Benchmark  & 2333=8*2+9*13+10*34+11*41+12*50+13*37+14*18+15*4+16*1 \\ \hline
     $(28,1,0.2,100,4)$ &SOCP-D/PABB-D & 2236=7*1+8*5+9*22+10*36+11*57+12*38+13*27+14*13+15*1\\ \hline
     $(28,1,0.2,10,40)$ &SOCP-D/PABB-D & 1857=6*5+7*16+8*38+9*51+10*51+11*28+12*9+13*2\\ \hline
  \hline
\end{tabular}\end{table*}}

\emph{Simulation Results and Analysis:} {Table \ref{table1} summarizes the statistics of the number of supported links of $200$ Monte-Carlo runs of numerical experiments with different choices of simulation parameters. {}{For instance, ``$664=2*19+3*98+4*83$'' in the third column of Table \ref{table1} stands for that when $\left(K,D_1,D_2,\kappa,b\right)=\left(4,2,0.4,+\infty,2\right),$ total $664$ links are supported in these $200$ Monte-Carlo runs, and amongest them, $2$ links are supported $19$ times, $3$ links are supported $98$ times, and $4$ links are supported $83$ times.} Figs. \ref{user1}, \ref{power1}, and \ref{time1} are obtained by averaging over the $200$ Monte-Carlo runs. 
They plot the average number of supported links, the average total transmission power, 
 and the average execution CPU time of the proposed SOCP-D and PABB-D algorithms {}{(for solving the sampled JPAC problem \eqref{csparse2})} and the benchmark versus different number of total links in Setup1.


\begin{figure}[!t]
     \centering
     \includegraphics[width=8.8cm]{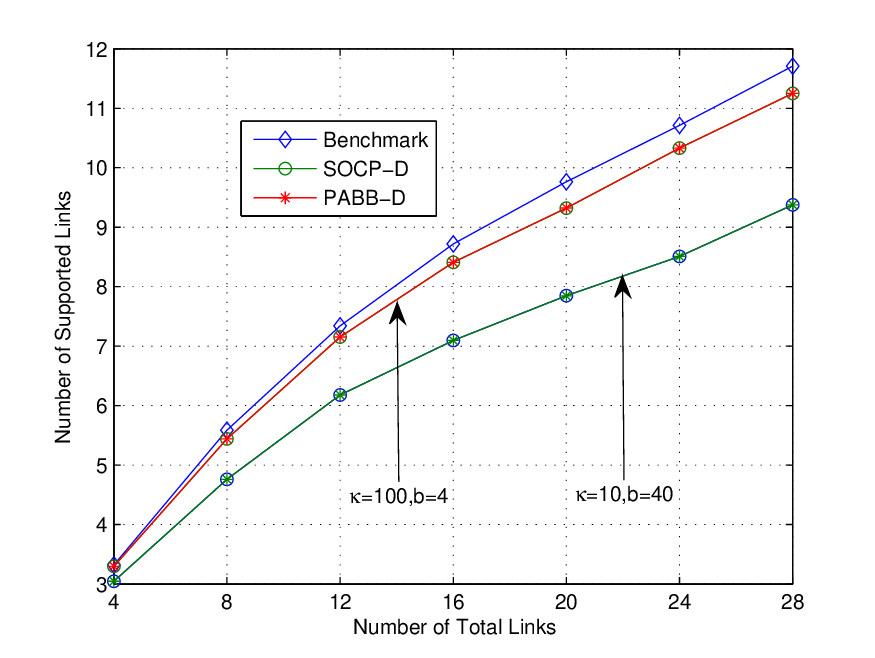}
     \caption{{}{Average number of supported links versus the number of total links in Setup1.}}
     \label{user1}
     \end{figure}

 \begin{figure}[!t]
     \centering
     \includegraphics[width=8.8cm]{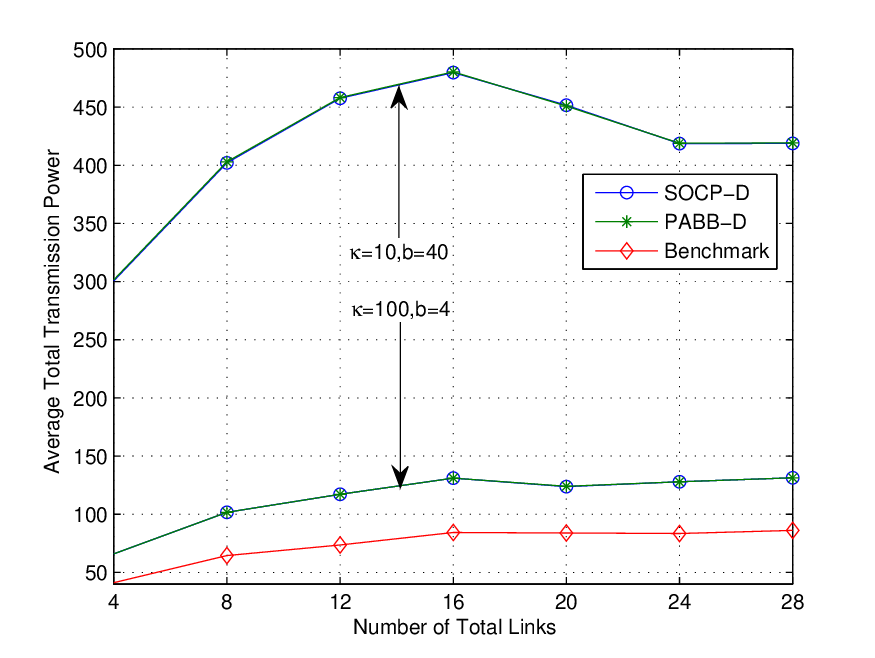}
     \caption{{}{Average total transmission power versus the number of total links in Setup1.}}
     \label{power1}
     \end{figure}
     \begin{figure}[!t]
     \centering
     \includegraphics[width=8.8cm]{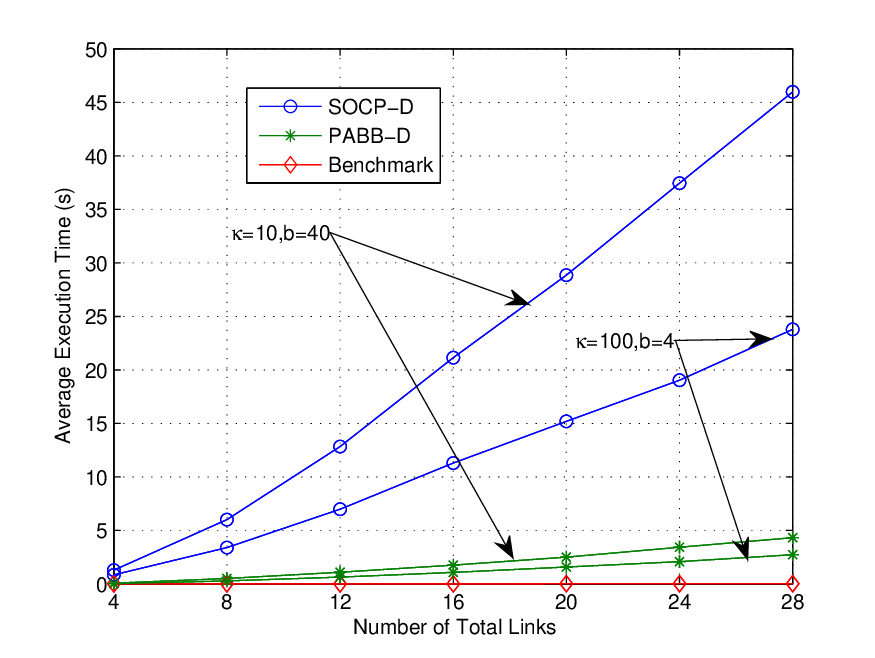}
     \caption{{}{Average execution time (in seconds) versus the number of total links in Setup1.}}
     \label{time1}
     \end{figure}

{It can be seen from Fig. \ref{user1} that the number of supported links by the two proposed algorithms (for fading channels) is less than the benchmark (for deterministic channels). This shows that the uncertainty of channel gains could lead to a (significant) reduction in the number of supported links. This can also be clearly observed from Table \ref{table1}. For instance, when $K=4$ (see the first three lines of Table \ref{table1}) all links can be simultaneously supported $83$ times when $\kappa=+\infty,$ $78$ times when $\kappa=100,$ and only $50$ times when $\kappa=10.$ In fact, this is the reason why we associate different $\kappa$ with different $b$ in our simulations. We expect that a large $b$ and thus large power budgets $\bar p_k$ (cf. \eqref{powerbudget}) can compensate the performance degradation of the number of supported links caused by the large uncertainty of channel gains.}

%
%
{Table \ref{table1}, Fig. \ref{user1}, and Fig. \ref{power1} show that the two proposed algorithms always return the same solution to the sampled JPAC problem \eqref{csparse2}, i.e., supporting same number of links with same total transmission power. However, Fig. \ref{time1} shows that the PABB-D algorithm substantially outperforms the SOCP-D algorithm in terms of the average CPU time. This is not surprising, {}{since in each power control step (i.e., solving the convex approximation problem \eqref{relax}), the custom-design algorithm is used to carry out power control in the PABB-D algorithm while a general-purpose solver CVX is used to update power in the SOCP-D algorithm. Note that both the number of constraints and the number of unknown variables in problem \eqref{relax2} are of order $O(NK)$ and \eqref{N-K*} shows that the sample size $N$ needs to be large to guarantee the approximation performance, which makes CVX unsuitable to be used to solve problem \eqref{relax} via solving {}{its} equivalent SOCP {}{reformulation} \eqref{relax2}.}}

\begin{figure}[!t]
     \centering
     \includegraphics[width=8.8cm]{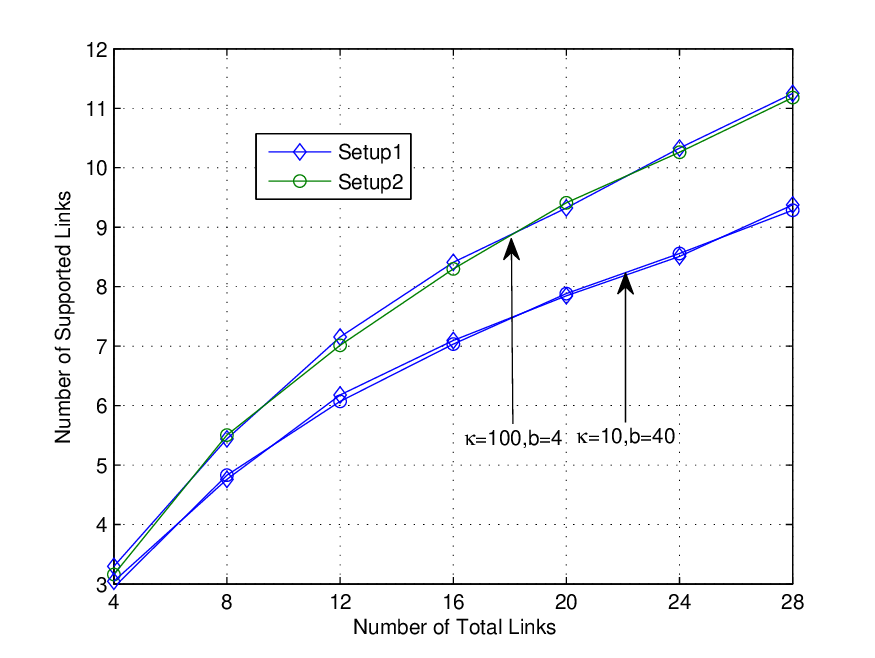}
     \caption{{}{Average number of supported links versus the number of total links in Setup1 and Setup2.}}
     \label{nuser1}
     \end{figure}

      \begin{figure}[!t]
     \centering
     \includegraphics[width=8.8cm]{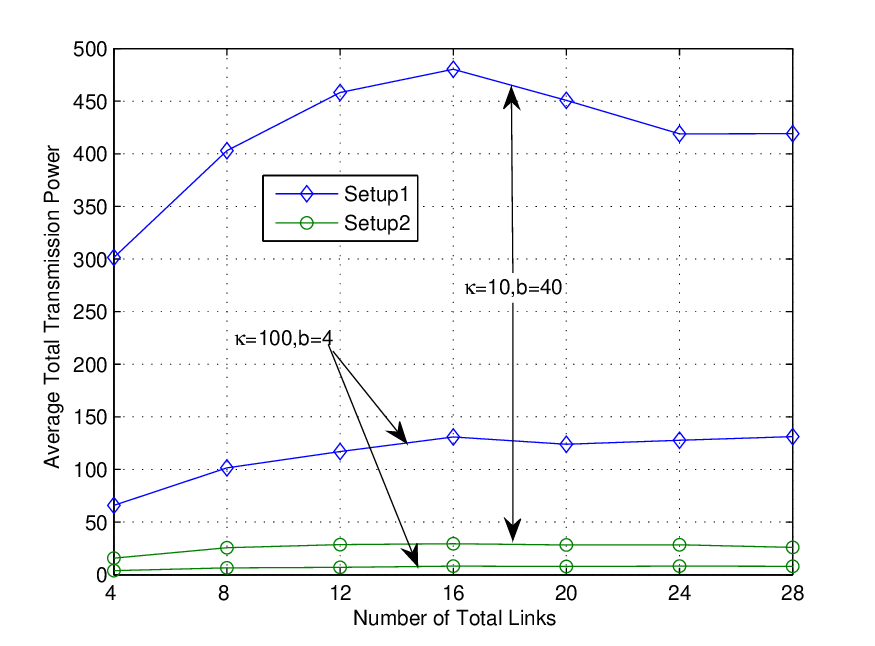}
     \caption{{}{Average total transmission power versus the number of total links in Setup1 and Setup2.}}
     \label{npower1}
     \end{figure}

{By comparing the two sets of numerical experiments where $(\kappa,b)=(100,4)$ and $(\kappa,b)=(10,40),$ it can be observed from Figs. \ref{user1} and \ref{power1} that more links can be supported with significantly less total transmission power in the former case than the latter case.%
%
~This is because the uncertainty of the channel gains with $\kappa=100$ is generally much smaller than the one with $\kappa=10.$ 
We also point out that the execution CPU time of the two proposed deflation algorithms mainly depends on how many times the power control problem \eqref{relax} is solved. In general, the larger number of links are supported, the smaller number of links are removed from the network and the smaller number of power control problems in form of \eqref{relax} are solved. Therefore, the average CPU time of the proposed algorithms when $(\kappa,b)=(10,40)$ is larger than the one when $(\kappa,b)=(100,4);$ see Fig. \ref{time1}.}

{}{For conciseness, we do not present the comparison of the PABB-D and SOCP-D algorithms when both algorithms are used to solve the sampled JPAC problem \eqref{csparse2} in Setup2, since the same observations as in Setup1 can be made in Setup2. Instead, we focus on the comparison of two different setups in the following. Since the PABB-D algorithm always returns the same solution as the SOCP-D algorithm but takes much less CPU time, we choose to use the PABB-D algorithm to solve the sampled JPAC problem \eqref{csparse2} in the following.}

\begin{table*}[!ht]
\caption{{}{The Ratio of the Average Total Transmission Power in Setup1 to That in Setup2.}}\label{table3}\centering
 \begin{tabular}{|c|c|c|c|c|c|c|c|}
  \hline\hline
   \backslashbox{$(\kappa,b)$}{$K$} & 4 & 8 & 12 & 16 & 20 & 24 & 28 \\\hline
  $(100,4)$ & 17.1513 & 15.7554 & 16.5563 & 16.1834 & 15.9193 & 15.6577 & 16.4100 \\\hline
  $(10,40)$ & 19.1574 & 15.7159 & 16.0476 & 16.2672 & 15.9009 & 14.7517 & 16.0995 \\
  \hline\hline
\end{tabular}\end{table*}

{Figs. \ref{nuser1} and \ref{npower1} report the average number of supported links and the average total transmission power ({}{returned by the PABB-D algorithm for solving the sampled JPAC problem \eqref{csparse2}}) versus different number of total links in Setup1 and Setup2. It can be observed that the average number of supported links and the average execution time in both setups are roughly equal to each other, but the average total transmission power in Setup1 is approximately $16$ times as large as that in Setup2 (also see Table \ref{table3}). This is because when the setup is switched from Setup1 to Setup2 (with the random variables $\zeta_{k,j}$ in \eqref{channel} and \eqref{gain} neglected for the time being), the corresponding distances $d_{k,j}$ between the transmitters and receivers decrease by half. {}{According to \eqref{gain} and \eqref{powerbudget}, all the channel gains (including both the direct-link and cross-link channel gains) increase and power budgets decrease by a factor of $16.$ As the channel gains are increased and power budgets are decreased by a same factor while the noise powers remain to be fixed, the number of supported links in problem \eqref{csparse2} remains unchanged.}
However, it brings a benefit of a 93.75\% (=15/16) reduction in the total transmission power, which is consistent with our engineering practice. Due to the effects of the random variables $\zeta_{k,j}$ in \eqref{channel} and \eqref{gain}, the ratio of the average total transmission power in Setup1 to that in Setup2 is approximately (but not exactly) $16.$
}

\section{Conclusions}
In this paper, we have considered the chance SINR constrained JPAC problem, 
and have proposed two sample approximation-based deflation approaches for {solving} the problem. We first approximated the computationally intractable chance SINR constraints by sampling, and then reformulated the sampled JPAC problem as a composite group sparse minimization problem. Furthermore, we approximated the NP-hard group sparse minimization problem by a convex problem (equivalent to an SOCP) and used its solution to check the simultaneous supportability of all links and to guide an iterative link removal procedure (the deflation {}{approach}), resulting in two efficient deflation algorithms (SOCP-D and PABB-D).

The proposed approaches are particularly attractive for practical implementations for the following reasons. {First, the two proposed approaches only require the CDI, which is more practical than most of the existing algorithms for JPAC where the perfect instantaneous CSI is required.} Second, the two proposed approaches enjoy a low computational complexity. The SOCP-D approach has a polynomial time complexity. To further improve the computational efficiency, the special structure of the SOCP approximation problem is exploited, and an efficient algorithm, {PABB-D, is custom designed for solving it. The PABB-D algorithm} significantly outperforms the SOCP-D algorithm in terms of the execution CPU time. Finally, our simulation results show that the proposed approaches are very effective by using the NLPD algorithm as the benchmark.

%
%



\appendices

\section{}\label{app-proposition}
 \emph{Proof of Proposition \ref{mingti1}:} 
 We prove Proposition \ref{mingti1} by contraction. Assume that link $k$ is supported in \eqref{csparse2} but $\bA_k\bq^*>\bc_k$ holds true{. Then} we can construct a feasible point $\hat \bq$ 
 satisfying
 \begin{equation}\label{firstsecond}
 \begin{array}{rl}
 &\!\!\sum_{k\in\K}\|\max\left\{\bc_k-\bA_k\hat\bq,\bm{0}\right\}\|_{0}+\alpha\bar\bp^T\hat\bq<\sum_{k\in\K}\|\max\left\{\bc_k-\bA_k\bq^*,\bm{0}\right\}\|_{0}+\alpha\bar\bp^T\bq^*.
 \end{array}\end{equation}
Define $\hat \bq=\left(\hat q_1,\hat q_2,\ldots,\hat q_K\right)^T$ with 
\begin{equation*}
\hat q_{j}=\left\{\begin{array}{ll}
\max\left\{(\bE_k-\bA_k)\bq^*+\bc_k\right\},&\text{if~}j=k;\\[2pt]
q_j^*,&\text{if~}j\neq
k.
\end{array}
\right.
\end{equation*}
Recalling the {definitions} of $\bE_k$ and $\bA_k,$
~we know $\bE_k-\bA_k$ is a nonnegative matrix, and thus $\hat q_k=\max\left\{(\bE_k-\bA_k)\bq^*+\bc_k\right\}>0.$
Since $$\bA_k\bq^*=\bE_k\bq^*-(\bE_k-\bA_k)\bq^*= q_k^* \be-(\bE_k-\bA_k)\bq^*>\bc_k,$$ it follows from the definition of $\hat q_k$ that $$q_k^*>\max\left\{(\bE_k-\bA_k)\bq^*+\bc_k\right\}=\hat q_k.$$ Hence, $\hat \bq$ is feasible (i.e., $\bm{0}\leq \hat \bq\leq \bq^* \leq \be$ and the inequality $\hat \bq\leq \bq^*$ holds true strictly for the $k$-th entry) and \begin{equation}\label{second}\bar\bp^T\hat \bq<\bar\bp^T\bq^*.\end{equation}

Moreover, 
it follows from the definition of $\hat\bq$ that $\bA_k\hat\bq\geq\bc_k.$ For any $j\neq k,$ if $\bA_j\bq^*\geq\bc_j,$ we have $\bA_j\hat\bq-\bc_j\geq \bA_j\bq^*-\bc_j\geq\bm{0},$ where the first inequality is due to the fact that all entries of $\bA_j$ except its $j$-th column are nonpositive and the fact $\hat \bq\leq \bq^*.$
%
Consequently, there holds $\J^*\subset\hat\J,$ where $\J^*=\left\{j\,|\,\bA_j\bq^*\geq\bc_j\right\}$ and $\hat \J=\left\{j\,|\,\bA_j\hat\bq\geq\bc_j\right\}.$ Thus, we have
          \begin{equation}\label{first}\begin{array}{rl}
          \sum_{k\in\K}\|\max\left\{\bc_k-\bA_k\bq^*,\bm{0}\right\}\|_{0}=& \sum_{k\notin \J^*}\|\max\left\{\bc_k-\bA_k\bq^*,\bm{0}\right\}\|_{0}\\[5pt]
          \geq & \sum_{k\notin \hat\J}\|\max\left\{\bc_k-\bA_k\hat\bq,\bm{0}\right\}\|_{0}\\[5pt]
          =&\sum_{k\in\K}\|\max\left\{\bc_k-\bA_k\hat\bq,\bm{0}\right\}\|_{0}.
          \end{array}
          \end{equation}
Combining \eqref{second} and \eqref{first} yields \eqref{firstsecond},
which contradicts the optimality of $\bq^*.$ This completes the proof of Proposition \ref{mingti1}.

\section{}\label{app-1}
\emph{Proof of Proposition \ref{subgradient2}:} To prove Proposition \ref{subgradient2}, we first consider the simple case where $h(\bq)=\|\max\left\{\bq,
  \bm{0}\right\}\|_2.$
\begin{yinli}\label{subgradient}
  Suppose $h(\bq)=\|\max\left\{\bq,
  \bm{0}\right\}\|_2$. Then $\partial h(\bm{0})=\left\{\bs\,|\,\bs\geq\bm{0}, \|\bs\|_2\leq 1\right\}.$ If 
  there exists $i$ such that $(\bq)_i>0,$
  then $h(\bq)$ is differentiable and
  \begin{equation}\label{grad}\nabla h(\bm{\bq})=\frac{\max\left\{\bq,\bm{0}\right\}}{\|\max\left\{\bq,
  \bm{0}\right\}\|_2}.\end{equation}
\end{yinli}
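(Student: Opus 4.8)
The plan is to prove Lemma~\ref{subgradient} in two pieces: first the subdifferential of $h(\bq)=\|\max\{\bq,\bm{0}\}\|_2$ at the origin, and then differentiability together with the gradient formula \eqref{grad} at any point with a strictly positive coordinate. For the first part, I would write $h$ as the composition $h(\bq)=\|\bu(\bq)\|_2$ with $\bu(\bq)=\max\{\bq,\bm{0}\}$, note that $h$ is convex (same argument as in the paragraph establishing convexity of $h_k$), so its subdifferential at $\bm{0}$ is characterized by $\bs\in\partial h(\bm{0})$ iff $h(\bq)\geq h(\bm{0})+\bs^T\bq=\bs^T\bq$ for all $\bq$, since $h(\bm{0})=0$. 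I would then check the two inclusions: if $\bs\geq\bm{0}$ and $\|\bs\|_2\leq 1$, then $\bs^T\bq\leq\bs^T\max\{\bq,\bm{0}\}\leq\|\bs\|_2\|\max\{\bq,\bm{0}\}\|_2\leq h(\bq)$ by Cauchy--Schwarz, giving one containment; conversely, testing the subgradient inequality on $\bq=\be_i$ forces $(\bs)_i=s_i\leq h(\be_i)=1$ but more usefully, testing on $\bq=-t\be_i$ with $t>0$ gives $0\geq -ts_i$, i.e.\ $s_i\geq 0$, so $\bs\geq\bm{0}$; then testing on $\bq=\bs$ itself gives $\|\bs\|_2^2\leq\bs^T\max\{\bs,\bm{0}\}=\|\bs\|_2^2$ — that's vacuous, so instead I would test on a general $\bq\geq\bm{0}$ to get $\bs^T\bq\leq\|\bq\|_2$ for all $\bq\geq\bm{0}$, and take $\bq=\bs$ (which is $\geq\bm{0}$) to conclude $\|\bs\|_2^2\leq\|\bs\|_2$, hence $\|\bs\|_2\leq 1$.

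For the second part, suppose $(\bq)_i>0$ for some $i$. Then $\max\{\bq,\bm{0}\}\neq\bm{0}$, so the outer norm $\|\cdot\|_2$ is being evaluated away from its only nonsmooth point, and $\|\cdot\|_2$ is differentiable there with gradient $\bv\mapsto\bv/\|\bv\|_2$. The inner map $\bq\mapsto\max\{\bq,\bm{0}\}$ is differentiable at every point with no zero coordinate, but at $\bq$ we may still have zero coordinates; the key observation is that it is differentiable \emph{in the directions that matter}. More carefully, I would argue directly: near such a $\bq$, for coordinates $j$ with $(\bq)_j>0$ the map is locally the identity, for coordinates with $(\bq)_j<0$ it is locally zero, and for coordinates with $(\bq)_j=0$ the component $\max\{(\bq)_j,\bm{0}\}$ contributes $0$ to both $\|\max\{\bq,\bm{0}\}\|_2^2$ and its derivative since it enters quadratically — i.e.\ $g(\bq):=\|\max\{\bq,\bm{0}\}\|_2^2=\sum_j(\max\{(\bq)_j,0\})^2$ is continuously differentiable everywhere with $\nabla g(\bq)=2\max\{\bq,\bm{0}\}$, because $t\mapsto(\max\{t,0\})^2$ is $C^1$ on all of $\mathbb{R}$ with derivative $2\max\{t,0\}$. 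Since $g(\bq)>0$ here, $h=\sqrt{g}$ is differentiable with $\nabla h(\bq)=\nabla g(\bq)/(2\sqrt{g(\bq)})=\max\{\bq,\bm{0}\}/\|\max\{\bq,\bm{0}\}\|_2$, which is \eqref{grad}.

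I do not anticipate a serious obstacle here; the only subtlety is resisting the temptation to claim $\bq\mapsto\max\{\bq,\bm{0}\}$ is differentiable at points with zero coordinates (it is not, as a map into $\mathbb{R}^n$), and instead using the fact that the \emph{squared} norm composition $g$ is $C^1$ because each scalar piece $(\max\{t,0\})^2$ is $C^1$. Once Lemma~\ref{subgradient} is in place, Proposition~\ref{subgradient2} follows by the chain rule applied to $h_k(\bq)=h(\bc_k-\bA_k\bq)$: at a point $\bar\bq$ with $\bc_k-\bA_k\bar\bq=\bm{0}$ one gets $\partial h_k(\bar\bq)=-\bA_k^T\partial h(\bm{0})=\{-\bA_k^T\bs\mid\bs\geq\bm{0},\|\bs\|_2\leq 1\}$, and when $\N_k^+\neq\emptyset$ one gets $\nabla h_k(\bar\bq)=-\bA_k^T\bigl(\max\{\bc_k-\bA_k\bar\bq,\bm{0}\}/\|\max\{\bc_k-\bA_k\bar\bq,\bm{0}\}\|_2\bigr)$, which upon expanding $\bA_k^T=(\bm{a}_k^1,\dots,\bm{a}_k^N)$ row-by-row is exactly the displayed sum over $n\in\N_k^+$.
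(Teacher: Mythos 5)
Your proposal is correct and follows essentially the same route as the paper's proof: the Cauchy--Schwarz plus nonnegativity argument for $\left\{\bs\,|\,\bs\geq\bm{0},\|\bs\|_2\leq 1\right\}\subset\partial h(\bm{0})$, coordinate-wise test points for the reverse inclusion (you test $-t\be_i$ and then $\bq=\bs$, the paper tests $\bs/\|\bs\|_2$ and $-\be_1$, which are interchangeable variants), and the observation that $t\mapsto(\max\{t,0\})^2$ is $C^1$ so that $h=\sqrt{g}$ is differentiable via the composite rule wherever some coordinate is positive. Your explicit remark that the vector map $\bq\mapsto\max\{\bq,\bm{0}\}$ itself is not differentiable, and that one must go through the squared norm, is a slightly more careful rendering of the paper's appeal to the composite rule, but it is the same argument.
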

\begin{proof}
  Define ${\S}=\left\{\bs\,|\,\bs\geq\bm{0}, \|\bs\|_2\leq 1\right\}.$ We claim $\partial h(\bm{0})={\S}.$ On one hand, {taking} any $\bs\in{\S},$ we have that $$h(\bq)=\|\max\left\{\bq,
  \bm{0}\right\}\|_2\geq\bs^T\max\left\{\bq,\bm{0}\right\}\geq \bs^T\bq=h(\bm{0})+\bs^T(\bq-\bm{0}),~\forall~\bq,$$ where the first inequality is due to the Cauchy-Schwarz inequality and the fact $\|\bs\|_2\leq 1$, and the second inequality is due to $\bs\geq\bm{0}$. This shows that ${\S}\subset \partial h(\bm{0})$ according to the definition of $\partial h(\bm{0})$ \cite{analysis}. On the other hand, to show $\partial h(\bm{0})\subset {\S},$ it {suffices} to show that any point $\bs\notin{\S}$ is not a subgradient of $h(\bq)$ at point $\bm{0}.$ In particular, if $\|\bs\|_2>1,$ then $$h(\bq)=\|\max\left\{\bq,
  \bm{0}\right\}\|_2\leq\|\bq\|_2=1< \|\bs\|_2= \bs^T\bq=h(\bm{0})+\bs^T(\bq-\bm{0})$$ with $\bq=\bs/\|\bs\|_2.$ Thus, the subgradient $\bs$ of $h(\bq)$ at point $\bm{0}$ must satisfy $\|\bs\|_2\leq1.$ If $\|\bs\|_2\leq 1$ but $(\bs)_1<0$ (without loss of generality), we test the point $\bq=\left(-1,0,\ldots,0\right)^T,$ and obtain
  $$h(\bq)=\|\max\left\{\bq,
  \bm{0}\right\}\|_2=0<-\left(\bs\right)_1= \bs^T\bq=h(\bm{0})+\bs^T(\bq-\bm{0}).$$ Consequently, the subgradient $\bs$ of $h(\bq)$ at point $\bm{0}$ must satisfy $\|\bs\|_2\leq1$ and $\bs\geq\bm{0}.$ Hence, $\partial h(\bm{0})=\left\{\bs\,|\,\bs\geq\bm{0}, \|\bs\|_2\leq 1\right\}.$

  Next, we show that $h(\bq)$ is differentiable at the point $\bq$ {which has at least one positive entry}, and the corresponding gradient is given in \eqref{grad}. In fact, although the function $\max\left\{q,0\right\}$ is nondifferentiable at point $q=0,$ its square $f(q)=\left(\max\left\{q,0\right\}\right)^2$ is differentiable everywhere; i.e.,
  \begin{equation*}
f'({q})=\left\{\begin{array}{cl}
0,&\text{if~}q \leq{0};\\[5pt]
\displaystyle 2q,&\text{if~} q> {0}.
\end{array}
\right.
\end{equation*}
According to the composite rule of differentiation, we know that the gradient of $h(\bq)$ is given by \eqref{grad}. This completes the proof of Lemma \ref{subgradient}.
\end{proof}

Equipping with Lemma \ref{subgradient}, we now can prove Proposition \ref{subgradient2}. Without loss of generality, assume that $\bc_k-\bA_k\bar\bq=\bm{0}$. Then, for any $\bq$ and any $\bs$ satisfying $\|\bs\|_2\leq 1$ and $\bs\geq\bm{0},$ we have
\begin{equation}\label{subdifferential}
\begin{array}{rcl}
h_k(\bq)=\|\max\left\{\bc_k-\bA_k\bq,\bm{0}\right\}\|_2&\ge &\bs^T\left(\bc_k-\bA_k\bq\right)~\left(\text{from Lemma \ref{subgradient}}\right)\\[5pt]
&=&h_k(\bar\bq)+\left(-\bA_k^T\bs\right)^T\left(\bq-\bar\bq\right),
\end{array}
\end{equation}which shows that all vectors in $\left\{-\bA_k^T\bs\,|\,\bs\geq\bm{0}, \|\bs\|_2\leq 1\right\}$ are subgradients of $h_k(\bq)$ at point $\bar\bq.$ In the same way as in the proof of Lemma \ref{subgradient}, we can show that if $\bs$ does not satisfy $\|\bs\|_2\leq 1$ and $\bs\geq\bm{0},$ the inequality in \eqref{subdifferential} will violate for some special {}{choice} of $\bq$. Hence, $\partial h_k(\bm{\bar\bq})=\left\{-\bA_k^T\bs\,|\,\bs\geq\bm{0}, \|\bs\|_2\leq 1\right\}.$

  If $\N_k^+\neq\emptyset,$ we know from the composite rule of differentiation and Lemma \ref{subgradient} that
  $h_k(\bm{\bar\bq})$ is differentiable and its gradient is given by
  \begin{align*}\nabla h_k(\bm{\bar\bq})&=\frac{-\sum_{n\in\N_k^+}\left(\bc_k-\bA_k\bar\bq\right)_n\left(\bm{a}_k^n\right)^T}{\|\max\left\{\bc_k-\bA_k\bar\bq, \bm{0}\right\}\|_2}=\frac{-\bA_k^T\max\left\{\bc_k-\bA_k\bar\bq, \bm{0}\right\}}{\|\max\left\{\bc_k-\bA_k\bar\bq, \bm{0}\right\}\|_2}.\end{align*} This completes the proof of Proposition \ref{subgradient2}.

\section{}\label{app-neverover}
  \emph{Proof of Theorem \ref{neverover}:} Suppose all links in the network can be simultaneously supported (i.e., there exists $\bm{0}\leq \bq\leq \be$ satisfying $\bA\bq\geq\bc$) and $\bar\bq$ is the solution to problem
  \begin{equation*}
\begin{array}{cl}
\displaystyle \min_{\bq} & \displaystyle \bar\bp^T\bq \\
  \mbox{s.t.} &  \bA\bq-\bc\geq \bm{0},\\
&\displaystyle \bm{0}\leq \bq\leq \be.
\end{array}
\end{equation*}To prove Theorem \ref{neverover}, it {suffices} to show that $\bar\bq$ is also the solution to problem \eqref{relax} with $\alpha\in(0,\alpha_2]$.
Moreover, to show $\bar \bq$ is the solution to problem \eqref{relax}, we only need to show that the subdifferential of the objective function of problem \eqref{relax} at point $\bar\bq$ contains $\bm{0}$\cite{analysis}. Next, we claim the latter is true.

%

We first characterize the subdifferential of the objective function of problem \eqref{relax} at point $\bar\bq$. It follows from \eqref{balance} that there {exists} ${\I}=\left\{n_1,n_2,\ldots,n_K\right\}$ such that $\bar\bq$ is the solution to the following linear system $$\bA_{\I}\bq:=\left[\bm{a}_1^{n_1};\bm{a}_2^{n_2};\ldots;\bm{a}_K^{n_K}\right]\bq=\left(c_1^{n_1};c_2^{n_2};\ldots;c_K^{n_K}\right):=\bc_{\I}.$$
 Recalling the definition of $\bm{a}_k^n$ (see Subsection \ref{sec-normalization}), we know that $\bI-\bA_{\I}$ is a nonnegative matrix. Moreover, from \cite[Theorem 1.15]{matrix}, $\bA_{\I}$ is nonsingular, $\bA_{\I}^{-1}$ is nonnegative, and \begin{equation}\label{upper}0<\be^T\bA_{\I}^{-1}\bc_{\I}=\be^T\bar\bq\leq K.\end{equation}
Define $h_k(\bq)=\left\|\max\left\{\bc_k-\bA_k\bq,\bm{0}\right\}\right\|_{2}$ for $k\in\K.$ It follows from \cite[Theorem 23.8]{analysis} that the subdifferential of the objective function of problem \eqref{relax} at point $\bar\bq$ is given by
$$\left\{\sum_{k\in\K}\bg_k+\alpha\bar\bp\,|\,\bg_k\in\partial h_k(\bar\bq),\,k\in\K\right\}.$$
%
According to Proposition \ref{subgradient2}, $\partial h_k(\bar\bq)$ \emph{contains}\footnote{The subdifferential of $h_k(\bq)$ at point $\bar\bq$ is not necessarily equal to ${\S}_k$ in \eqref{Sk}. This is because that some other entries (except the $n_k$-th entry) of the vector $\bc_k-\bA_k\bar\bq$ might also be zero.} all vectors in
\begin{equation}\label{Sk}{\S}_k=\left\{-s_k\left(\bm{a}_k^{n_k}\right)^T\,|\,0\leq s_k\leq 1\right\}.\end{equation}
   Therefore, all vectors in \begin{align*}{\S}=\left\{-\bA_{\I}^T\bs+\alpha\bar\bp\,|\,\bm{0}\leq\bs\leq\be\right\}=\left\{-\sum_{k\in\K}s_k\left(\bm{a}_k^{n_k}\right)^T+\alpha\bar\bp\,|\,0\leq s_k\leq1, k\in\K\right\}
   \end{align*} are subgradients of the objective function of problem \eqref{relax} at point $\bar\bq.$

   If $\bm{0}\in\S,$ the subdifferential of the objective function of problem \eqref{relax} at point $\bar\bq$ contains $\bm{0}$\cite{analysis}, which completes the proof of Theorem \ref{neverover}. Next, we show $\bm{0}\in\S$ is true. 
   Consider the vector $\bs=\alpha\bA_{\I}^{-T}\bar\bp.$ It is a nonnegative vector (since $\bA_{\I}^{-1}$ is nonnegative), and each of its entries is less than or equal to $1$ as long as $\alpha\leq\alpha_2.$ This is because
    \begin{equation*}
      \begin{array}{rcl}
        \be^T\bs=\be^T\alpha\bA_{\I}^{-T}\bar\bp\leq\alpha\max\left\{\bar\bp\right\}\be^T\bA_{\I}^{-T}\be                    \leq\alpha\frac{\max\left\{\bar\bp\right\}}{\min\left\{\bc_{\I}\right\}}\bc_{\I}^T\bA_{\I}^{-T}\be                                        \overset{(a)}{\leq}\alpha\frac{\max\left\{\bar\bp\right\}}{\min\left\{\bc\right\}}K&\leq&1,
      \end{array}
    \end{equation*}where $(a)$ is due to \eqref{upper} and the fact $\min\left\{\bc\right\}\leq\min\left\{\bc_{\I}\right\}$.
Substituting $\bs=\alpha\bA_{\I}^{-T}\bar\bp$ into ${\S},$ we obtain 
$-\bA_{\I}^T\bs+\alpha\bar\bp=-\bA_{\I}^T\left(\alpha\bA_{\I}^{-T}\bar\bp\right)+\alpha\bar\bp=\bm{0}.$ Thus, $\bm{0}\in\S.$

 \vspace*{4pt}

{}{\section*{Acknowledgment}
The authors would like to thank Professor Yu-Hong Dai of the Chinese Academy of Sciences for many
useful discussions and the anonymous reviewers for their useful comments.}

\end{document}